\newtheorem{corollary}{Corollary}
\newtheorem{theorem}{Theorem}
\def\bkE{{\rm I\kern-.17em E}}
\def\bk1{{\rm 1\kern-.17em l}}
\def\bkD{{\rm I\kern-.17em D}}
\def\bkR{{\rm I\kern-.17em R}}
\def\bkP{{\rm I\kern-.17em P}}
\def\bkZ{{\bf{Z}}}
\def\bkE{{\rm I\kern-.17em E}}
\def\bk1{{\rm 1\kern-.17em l}}
\def\bkD{{\rm I\kern-.17em D}}
\def\bkR{{\rm I\kern-.17em R}}
\def\bkP{{\rm I\kern-.17em P}}
\newcommand{\pushright}[1]{\ifmeasuring@#1\else\omit\hfill$\displaystyle#1$\fi\ignorespaces}
\newcommand{\pushleft}[1]{\ifmeasuring@#1\else\omit$\displaystyle#1$\hfill\fi\ignorespaces}
\def\bkZ{{\bf{Z}}}
\def\b12{(\beta_1,\beta_2)}
\newenvironment{example}{{\noindent \bf Example}}{\hfill $\square$\hspace{-4.5pt}\vspace{6pt}}
\newcounter{example}
\renewcommand{\theexample}{\thesection.\arabic{example}}
\newcounter{remark}
\renewcommand{\theremark}{\thesection.\arabic{remark}}
\def\t{^\top}
\def\Ebb{\mathbb{E}}
\newlength{\noteWidth}
\long\def\notes#1{\ifinner
{\tiny #1}
\else
\marginpar{\parbox[t]{\noteWidth}{\raggedright\tiny #1}}
\fi\typeout{#1}}
 \def\notes#1{\typeout{read notes: #1}} 
\newcommand{\ie}{i.e.\@\xspace} 
\newcommand{\inv}{^{-1}}
\def\Ebb{\mathbb{E}}
\def\Pbb{{\mathbb{P}}}
\def\Nbb{{\mathbb{N}}}
\def\limn{\ds \lim_{n \rightarrow \infty}}
\def\diag{\mathop{\hbox{\rm diag}}}
\def\spose#1{\hbox to 0pt{#1\hss}}
\def\text #1{\hbox{\quad#1\quad}}
\def\nthinsp{\mskip -2   mu}
\def\superstar{^{\raise 0.5pt\hbox{$\nthinsp *$}}}
\def\SUPERSTAR{^{\raise 0.5pt\hbox{$*$}}}
\def\lamstarT {\lambda^{\raise 0.5pt\hbox{$\nthinsp *$}T}}
\let\forallnew\forall
\renewcommand{\forall}{\forallnew\ }
\let\forall\forallnew
\def\ds{\displaystyle}
		\def\bkE{{\rm I\kern-.17em E}}
		\def\bk1{{\rm 1\kern-.17em l}}
		\def\bkD{{\rm I\kern-.17em D}}
		\def\bkR{{\rm I\kern-.17em R}}
		\def\bkP{{\rm I\kern-.17em P}}
		\def\bkY{{\bf \kern-.17em Y}}
		\def\bkZ{{\bf \kern-.17em Z}}
		\def\bkC{{\bf  \kern-.17em C}}
		\def\bsp{\begin{split}}
		\def\beq{\begin{eqnarray}}
		\def\bal{\begin{align*}}
		\def\bc{\begin{center}}
		\def\be{\begin{enumerate}}
		\def\bi{\begin{itemize}}
		\def\bs{\begin{small}}
		\def\bS{\begin{slide}}
		\def\ec{\end{center}}
		\def\ee{\end{enumerate}}
		\def\ei{\end{itemize}}
		\def\es{\end{small}}
		\def\eS{\end{slide}}
		\def\eeq{\end{eqnarray}}
		\def\eal{\end{align*}}
		\def\esp{\end{split}}
		\def\qed{ \vrule height7.5pt width7.5pt depth0pt}  
	\def\cp2problem#1#2#3#4{\fbox
		 {\begin{tabular*}{0.9\textwidth}
			{@{}l@{\extracolsep{\fill}}l@{\extracolsep{6pt}}l@{\extracolsep{\fill}}c@{}}
				#1 & & $#4 $ 
			\end{tabular*}}}
		\def\bkE{{\rm I\kern-.17em E}}
		\def\bk1{{\rm 1\kern-.17em l}}
		\def\bkD{{\rm I\kern-.17em D}}
		\def\bkR{{\rm I\kern-.17em R}}
		\def\bkP{{\rm I\kern-.17em P}}
		\def\bkZ{{\bf{Z}}}
\newcommand {\beeq}[1]{\begin{equation}\label{#1}}
\newcommand {\eeeq}{\end{equation}}
\newcommand {\bea}{\begin{eqnarray}}
\newcommand {\eea}{\end{eqnarray}}
\def\texitem#1{\par\smallskip\noindent\hangindent 25pt
               \hbox to 25pt {\hss #1 ~}\ignorespaces}
\def\bsp{\begin{split}}
		\def\beq{\begin{eqnarray}}
		\def\bal{\begin{align*}}
		\def\bc{\begin{center}}
		\def\be{\begin{enumerate}}
		\def\bi{\begin{itemize}}
		\def\bs{\begin{small}}
		\def\bS{\begin{slide}}
		\def\ec{\end{center}}
		\def\ee{\end{enumerate}}
		\def\ei{\end{itemize}}
		\def\es{\end{small}}
		\def\eS{\end{slide}}
		\def\eeq{\end{eqnarray}}
		\def\eal{\end{align*}}
		\def\esp{\end{split}}
		\def\qed{ \vrule height7.5pt width7.5pt depth0pt}  
\def\LOLP{\mathsf{LOLP}}
\newcommand{\jk}[1]{  \ifthenelse{\boolean{showcomments}}
{ \textcolor{red}{(JK says:  #1)}} {}  }
\newcommand{\aak}[1]{  \ifthenelse{\boolean{showcomments}}
{ \textcolor{blue}{(AAK says:  #1)}} {}  }
\newcommand{\vivek}[1]{  \ifthenelse{\boolean{showcomments}}
{ \textcolor{red}{(Vivek says:  #1)}} {}  }
\newcommand{\ignore}[1]{}
\newcommand{\ra}{\rightarrow}
\newcommand{\da}{\downarrow}
\newcommand{\prob}[1]{\mathbb{P}\left[#1\right]}
\newcommand{\Exp}[1]{\mathbb{E}\left[#1\right]}
\newcommand{\rev}{\scriptstyle{R}}
\newcommand{\nn}{\nonumber}
\begin{document}

\title{Statistical Economies of Scale in Battery Sharing}
%
%
\author{Vivek Deulkar,\ Jayakrishnan Nair and Ankur A. Kulkarni%
  \thanks{Vivek and Jayakrishnan are with the Department of Electrical
    Engineering, Indian Institute of Technology Bombay.}%
  \thanks{Ankur is with the Systems and Control Engineering group,
    Indian Institute of Technology Bombay.}%
  }
\maketitle

\begin{abstract}
  The goal of this paper is to shed light on the statistical economies
  of scale achievable from sharing of storage between renewable
  generators. We conduct an extensive study using real world wind data
  from a grid of equispaced wind generators sharing a common
  battery. We assume each generator is contracted to meet a certain
  demand profile to a prescribed level of reliability. We find that
  the statistical diversity in wind generation across different
  locations yields useful economies of scale once the grid spacing
  exceeds 200~km. When the grid spacing exceeds 500~km, we find that
  the economies grow dramatically: The shared battery size becomes
  insensitive to the number of participating generators. This means
  that the generators can access a common, shared battery and
  collectively achieve the same reliability they would have, had each
  of them had the \textit{entire} battery to themselves. To provide a
  rigourous foundation for this remarkable observation, we propose a
  mathematical model that demonstrates this phenomenon, assuming that
  the net generation (generation minus demand) processes associated
  with the generators are \emph{statistically independent}. The result
  is derived by characterizing the large deviations exponent of the
  loss of load probability with increasing battery size, and showing
  that this exponent is invariant with the number of generators.

\end{abstract}

\section{Introduction}

Dire concerns about the environment, volatility of prices and the
finiteness of natural resources have forged a global resolve for a
transition to more sustainable sources of energy such as wind and
solar energy. A fundamental challenge in the large-scale adoption of
such sources is their uncertainty and intermittency. Unlike
conventional generation, wind and solar energy is not available
``on-demand''. It is a function of the weather, making the
power generated random and time varying. One way to mitigate this
problem is to bundle renewable generation with a storage device, such
as a battery. This allows surplus generation to be stored by charging
the battery which can be discharged in times of deficit.

Despite many improvements on the technology front, a major roadblock
in battery adoption so far has been the cost. A consumer is usually
concerned about reliability, and the size of the battery required for
reliability comparable to that of conventional generation is extremely
high, making such batteries unaffordable by individual generators
\cite{Deulkar19,Dunn2011,EIA18}. Community storage is often considered
as a solution to this predicament, but it does not quite solve the
problem because, owing to strong correlations in generation (say
between various solar photovoltaic installations in the community) and
consumption within the community, as the number of users in the
community increases, the battery size requirements also scale
proportionately.

Our main thrust in the present paper is in showing that there is an
\textit{economy of scale} that kicks in when a battery is
\textit{shared} amongst \textit{suitably located} users. These users
can access a common, shared battery and collectively achieve the same
reliability they would have, had each of them had the \textit{entire}
battery to themselves. This allows the cost of the battery to be
divided in principle by \textit{any number} of users \textit{without}
sacrificing reliability, bringing down dramatically the cost of
battery ownership. Shedding light on this remarkable economy of scale
is the goal of this paper.

Economy of scale traditionally refers to the reduction in the unit
cost of production as an enterprise increases its output.  Production
of a good involves fixed costs and variable costs. Fixed costs
includes costs such as the cost of acquiring land, purchasing the
production equipment and installation costs. Variable costs are input
costs associated with production including raw material, energy, human
resources and so on. Thus when larger quantities are produced, the
fixed cost is distributed over a larger number of units, thereby
reducing the total cost per unit of the good.
In this paper we demonstrate a novel kind of economy of scale. Unlike
the classical notions, the basis of this economy of scale is the
\textit{statistical} diversity in renewable generation from sites
sufficiently geographically separated. This diversity makes it
possible for an instantaneous energy deficit at any location to be
satisfied using the instantaneous surplus at other locations some of
the time, diminshing the reliance on battery storage.

We demonstrate this economy of scale via an extensive data study using
past wind data obtained from NREL\cite{url_wind_data_windtoolkit}. We
consider grids of roughly equispaced wind generators sharing a common
battery, with the goal of meeting a constant local power demand
with high reliability. This demand may be interpreted as the
contracted commitment of the generator to the electricity market,
which it is required to fulfil to a prescribed level of
reliability\cite{Bitar12}. For each subset of generators, we compute
the shared battery size required to meet the demand, given a
prescribed bound on the loss of load probability ($\LOLP$), which is
the long run fraction of time the generator is unable to cater its
demand. Assuming the wind generators to be of comparable scale, a
sub-linear growth in the shared battery size with respect to the
number of generators $N$ would indicate economies of scale via battery
sharing. We find that these economies kick in once the grid spacing
exceeds 200~km. Once the grid spacing exceeds 500~km, the economies
become truly remarkable; in this case, we find that the shared battery
requirement remains insensitive to the number of participating
generators $N.$ This means that the battery size required to meet a
high degree of reliability at \textit{any single} location would suffice to
meet the joint requirement at $N$ locations, allowing for an $N$-fold
reduction in the cost of battery ownership. We also find that when the
$\LOLP$ targets are stricter (i.e., loss of load is a \emph{rare
  event}), the grid spacing needs to be larger for economies of scale
to be observed. This is perhaps because rare events in weather (such
as storms and depressions) are closely correlated across larger
distances, even while non-rare events are not. Interestingly, for a
grid spacing exceeding 750~km, the battery requirement shows a
\textit{decrease} with $N,$ suggesting a negative correlation between
wind power generation across very long distances.

To provide a theoretical foundation for the observed economies of
scale, we propose a mathematical model, where a battery is shared
among $N$ \textit{prosumers} -- users that generate \textit{and}
consume energy -- endowed with unreliable net generation. These users
could be standalone microgrids equipped with a stochastic source of
generation, or individual homes or industries that are equipped with a
solar PV installation. The net generation (\ie generation minus
demand) of user $i$ is a stochastic process denoted $X_i(\cdot).$ We
assume $\Ebb[X_i(\cdot)]$ is positive in steady-state and that the
processes $X_1(\cdot),\hdots,X_N(\cdot)$ are statistically
independent.  Users would like to minimize the \textit{loss of load
  probability} ($\LOLP$), \ie, the steady state probability that their
energy demand is unmet. With this goal, we assume the battery is
charged by a greedy policy -- when there is surplus generation, the
users charge their battery at the surplus rate and when there is
deficit, they discharge it at the deficit rate. In the case of a
single user $i$, the loss of load probability decays exponentially
with the size of the battery. Thus for a loss of load probability of
$\epsilon$, the battery size needed is given by $b_i \approx
\frac{1}{\lambda_i}\log \frac{1}{\epsilon}$ where $\lambda_i$ is the
exponent in the above exponential decay. We prove that if these $N$
users have independent net generation and share their battery, the
resulting exponent $\lambda$ of the combined system, is
\textit{invariant} in $N$ and is upper bounded by $\min_i
\lambda_i$. This implies that if a battery of size $B$ suffices for a
LOLP of $\epsilon$ for each of the $N$ independent users
\textit{individually}, then a \textit{single} battery of size $B$
suffices when shared between all of them collectively.

Importantly, the above result \textit{does not} require $N$ to be
large. The root of the economy of scale is \textit{not} in the sharing
of the battery among large number of users. Rather the crux of this
result is the assumption of independence of the net generation
processes of individual users. Of course, statistical independence is
an theoretical idealization that is not possible in real world weather
data. However, our data studies reveal that geographical separation
can serve as a proxy for independence in practice.
%
%

The \textit{sharing economy}, pioneered by services such as AirBnB and
Lyft, has disrupted many spheres of business by increasing affordability
of services and enhancing resource utilization. Our work shows that the
inherent statistical characteristics of wind generation provide us,
through the sharing of storage, tantalizing opportunities for
significantly scaling up the deployment and integration of wind
generation and transitioning towards a future of cleaner energy.
While sharing storage results in savings, it also incurs costs due to
the transmission infrastructure required for sharing.  Thus, our
results highlight the importance of investment in massive
continent-scale grids and inter-regional cooperation to facilitate
sharing across wider geographical regions.  The work also leads
naturally to novel questions on the optimal design of battery sharing
arrangements. Given a pool of users, which users should be come
together and share their battery? What are the optimal partitions
among this pool? How many batteries and of what size do we need? We
plan to address this in forthcoming work.

\subsection*{Related literature}
Previous studies on the power spectral density of the wind time-series
have attempted to compute auto-correlations and spatial correlations
of wind speeds \cite{apt2007spectrum,
  katzenstein2010variability,fertig2012effect}. These studies attempt
to estimate these correlations with the aim of approximating the wind
speed process in closed form. These studies have also found
correlations in the wind speeds to drop beyond the range of about
200-250 km. Of course, these studies have not concerned themselves with
the question of battery sharing.

There is considerable recent literature on energy sharing between
users having their own storage, as well as on battery sharing between
users. One line of work in this space focuses on the game theoretic
aspects of the interaction between uses, in a non-cooperative
framework (see, for example, \cite{kalathil2019sharingelec}), or in a
cooperative framework (see, for example,
\cite{chakraborty2018coalationalsharing,wu2016communitystorage}).

Another line of work focuses on the scheduling aspects of energy
storage, typically using a Markov decision process (MDP) framework
(for example, see \cite{korpaas2003opersizingstorage,
  kim2011optenergy,vandevan2012optcontenergy,
  zhou2018windstoragemdp}). These papers derive structural properties
of the optimal scheduling policies and propose useful heuristic
algorithms.

In contrast, the focus of the present paper is on statistical
economies of scale in battery sizing when the battery is shared
between multiple prosumers. The work closest to ours is
\cite{Deulkar19}, which characterizes the battery size required by a
single user in order to meet a certain reliability threshold via large
buffer asymptotics. To the best of our knowledge, there is no prior
work that characterizes the scaling of (shared) battery size 
with the number of users either analytically or through data.

\section{Data Study}
\label{sec:numerics}


The goal of this section is to empirically evaluate the economies of
scale achievable via battery sharing in practice. Specifically, we use
real world wind generation data from several locations across the
U.S. to construct a net generation time series for each location
(assuming a steady demand). We then evaluate the economies of scale in
battery sizing by computing the volume of shared storage required by
each subset of the locations in order to meet a given reliability
threshold. We find that there are nontrivial but modest economies of scale when the
locations are between 200-500 km apart, that improve as the spacing increases. Once the spacing
between the locations exceeds 500 km we find that, remarkably, the shared battery size needed in order to cater to $N$
locations \emph{does not} grow with $N.$ This implies that an $N$-fold
reduction in the cost of battery ownership/operation is achievable for
\emph{any} $N,$ limited only by the ability to dynamically transfer
power between the locations and the storage infrastructure.


\subsection{Data collection}

We gathered six years (2007--2012) of wind power generation data from
different locations in the United States (see the left panels of
Figures~\ref{fig:scenario1}--\ref{fig:scenario6}). The data was
obtained from the Wind Integration National Dataset (WIND) Toolkit
available from the National Renewable Energy Laboratory
(NREL).\cite{url_wind_data_windtoolkit} The WIND Toolkit contains the
meteorological conditions and wind turbine generation for around
1,26,000 sites in the U.S. for the years 2007--2013. These weather
observations and mesoscale climate data are further analyzed using the
Weather Research and Forecasting (WRF) model to `interpolate' an
underlying meteorological data set of much finer spatial
resolution. Thus, the wind power data available on the NREL
website~\cite{url_wind_data_windtoolkit} corresponds to a 2~km$\times$2
km grid spanning the continental United States. The interval between
succesive samples is five minutes. At each location, the generation
data corresponds to eight wind turbines, each with a rated power of 2
megawatt (MW) and a hub height of 100m.
Thus, the power generation time series corresponding to each grid
location ranges from 0--16~MW.

\begin{figure}[htp]
  
  \centering 
  \subfigure [Locations]
  {\includegraphics[width = 0.3\textwidth,trim={1cm 0 4cm 0},clip]{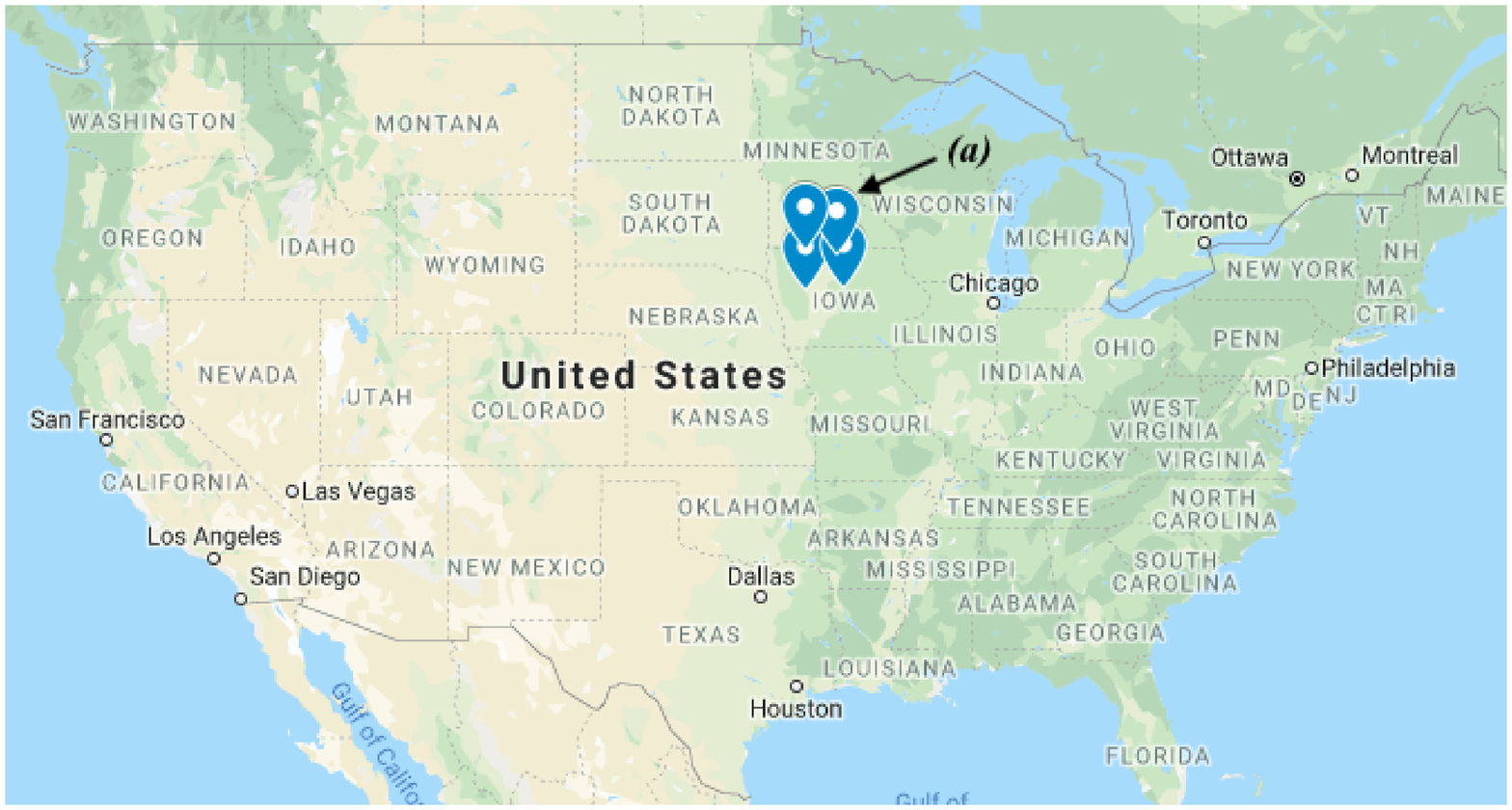} }
  \subfigure [Battery requirement]
  {\includegraphics[width = 0.3\textwidth,trim={1cm 0 0 0}]{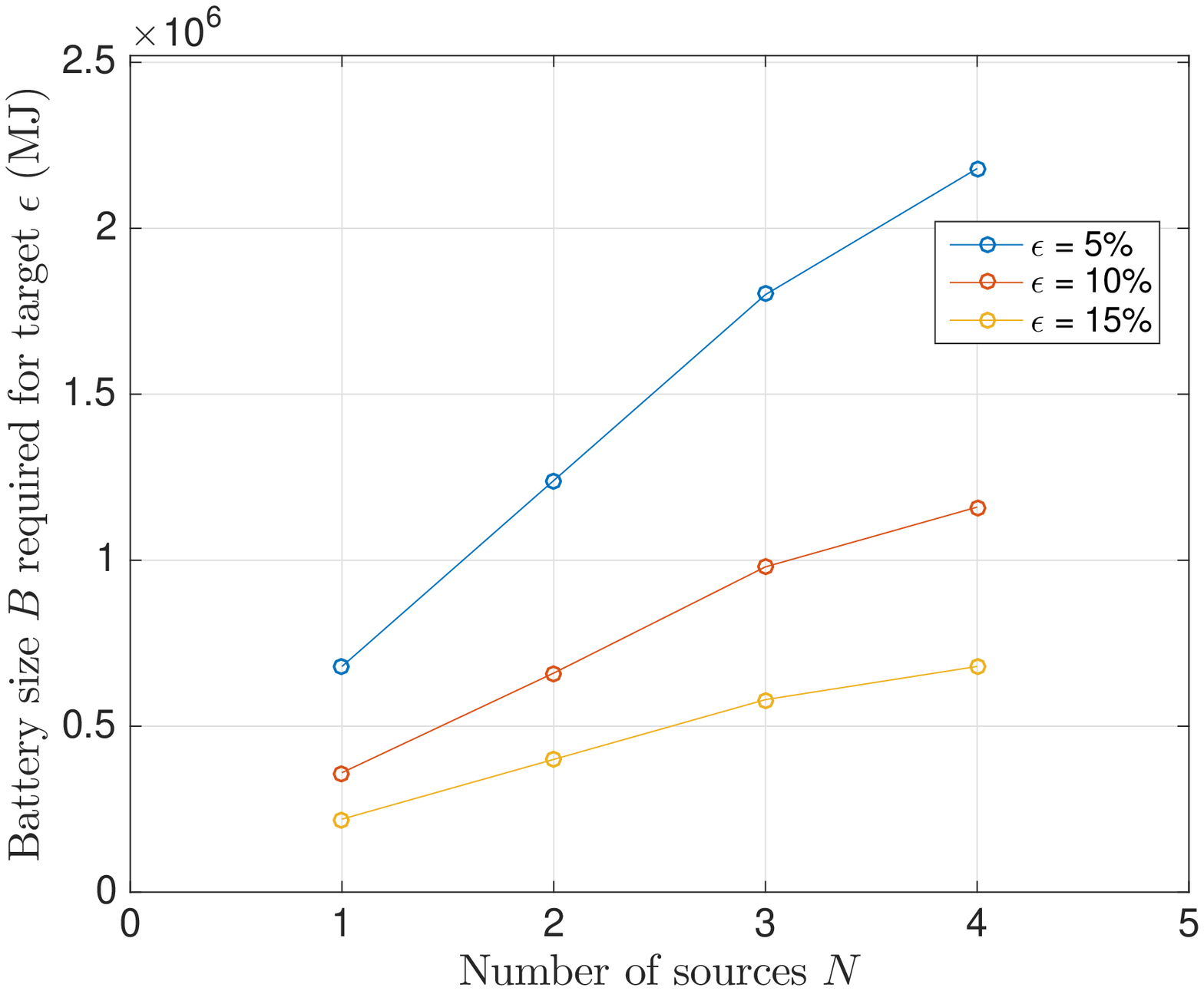} }
  \caption{Scenario 1: Locations within 200 km of one another}
  \label{fig:scenario1}

  \centering 
  \subfigure [Locations]
  {\includegraphics[width = 0.3\textwidth,trim={1cm 0 4cm 0},clip]{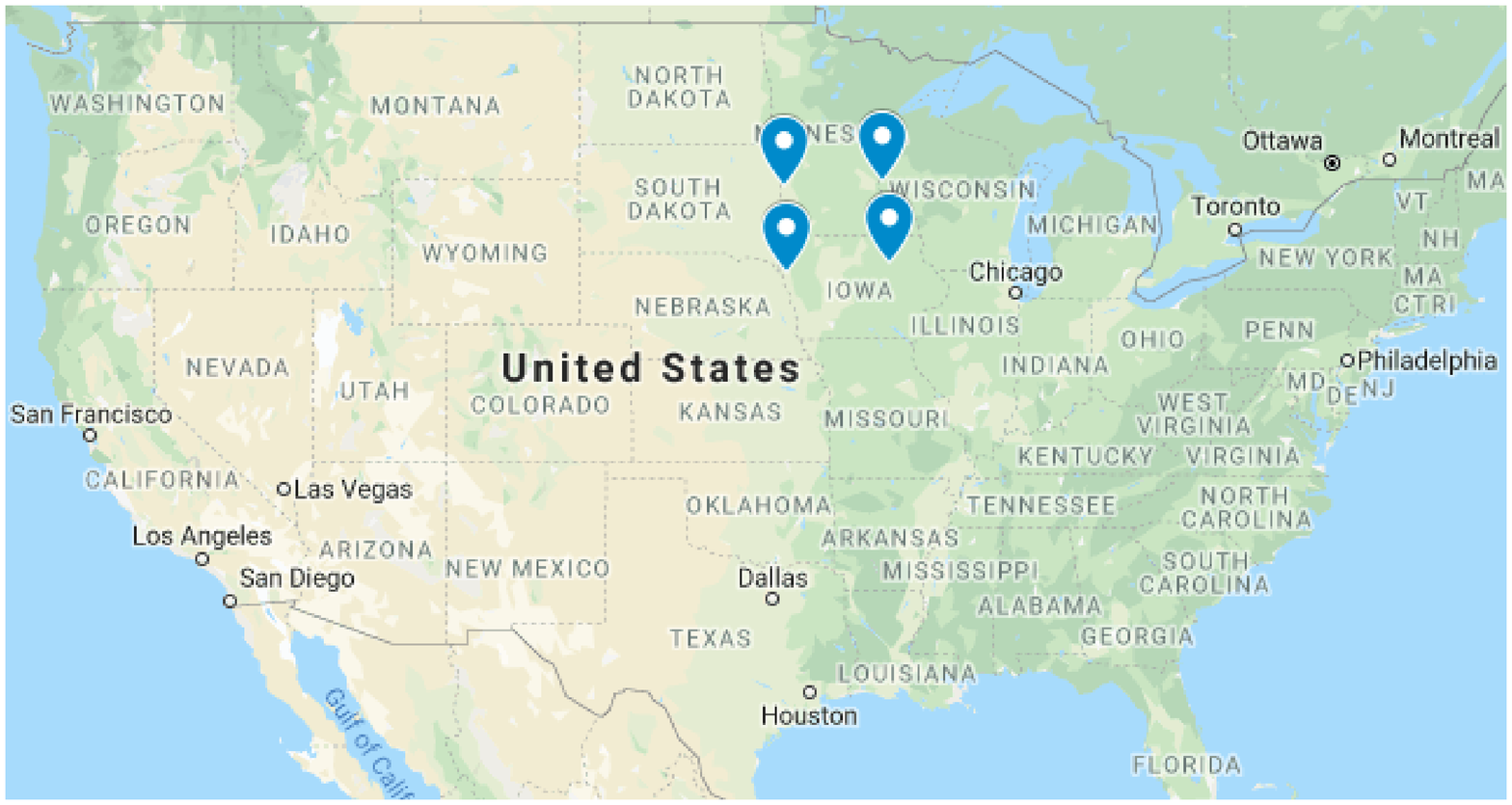} }
  \subfigure [Battery requirement]
  {\includegraphics[width = 0.3\textwidth,trim={1cm 0 0 0}]{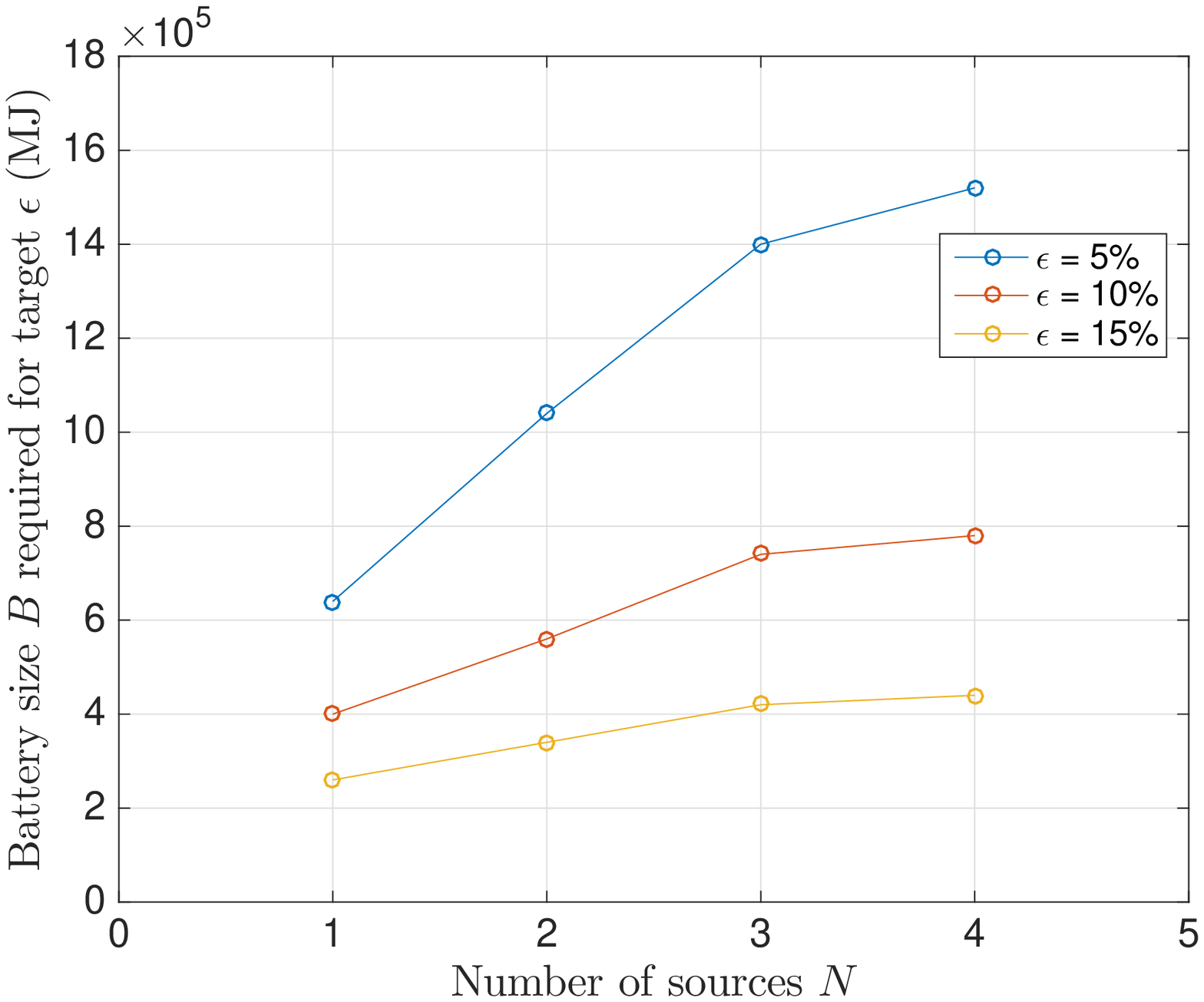} }
  \caption{Scenario 2: Locations between 200--500 km of one another}
  \label{fig:scenario2}

  \centering 
  \subfigure [Locations]
  {\includegraphics[width = 0.3\textwidth,trim={1cm 0 4cm 0},clip]{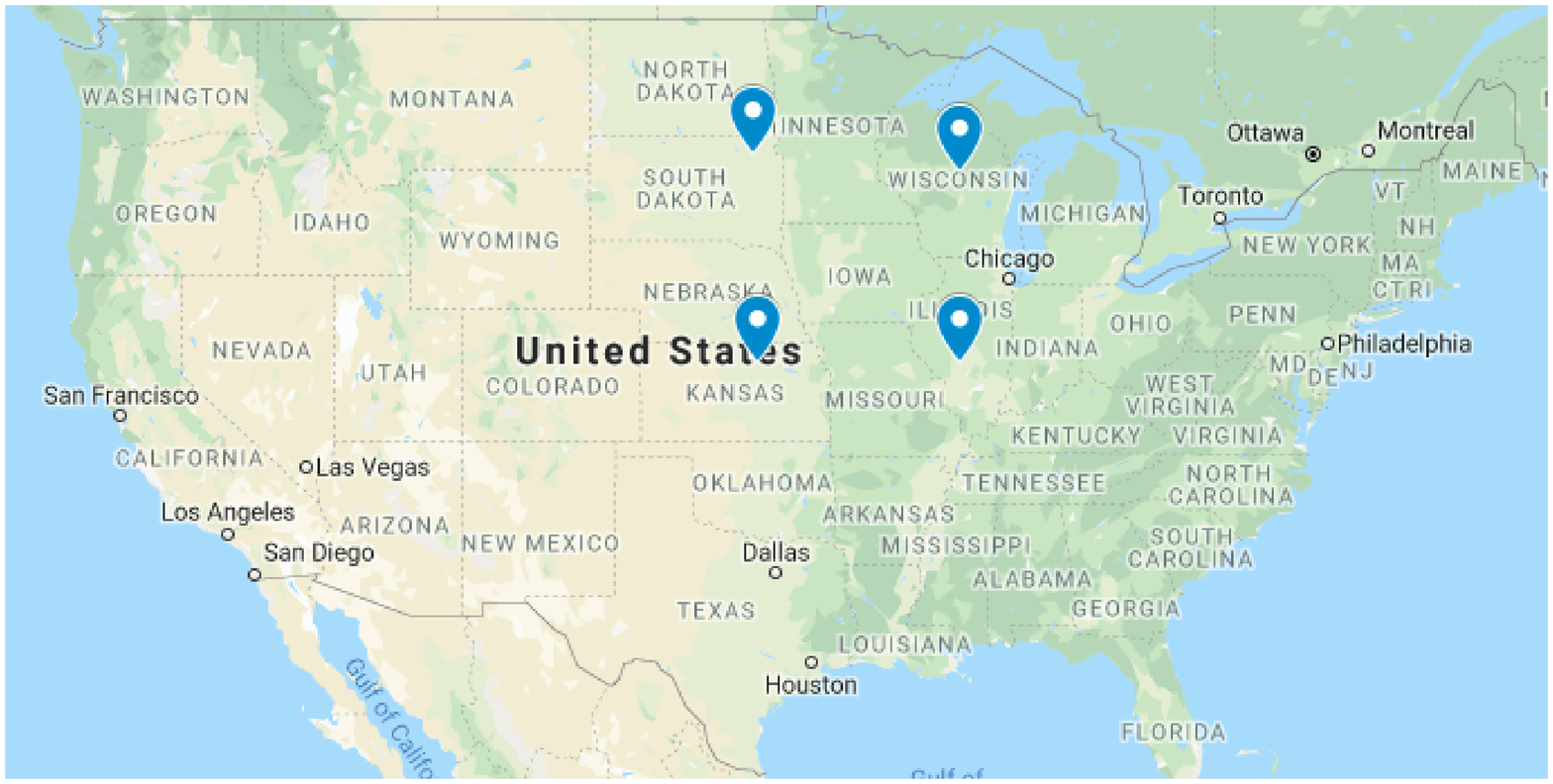} }
  \subfigure [Battery requirement]
  {\includegraphics[width = 0.3\textwidth,trim={1cm 0 0 0}]{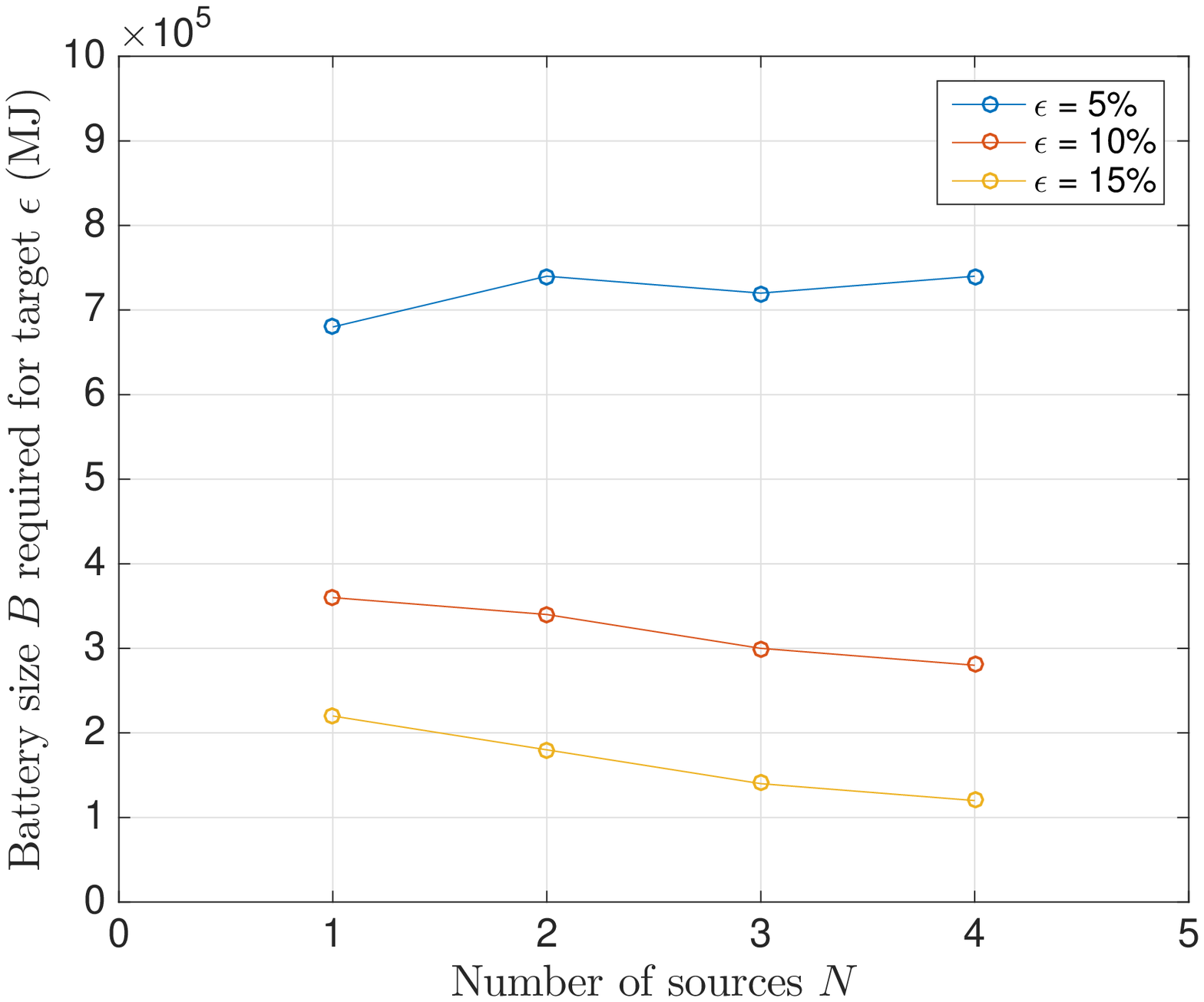} }
  \caption{Scenario 3: Locations between 500--1000 km of one another}
  \label{fig:scenario3}

  \centering
  \subfigure [Locations]
  {\includegraphics[width = 0.3\textwidth,trim={1cm 0 4cm 0},clip]{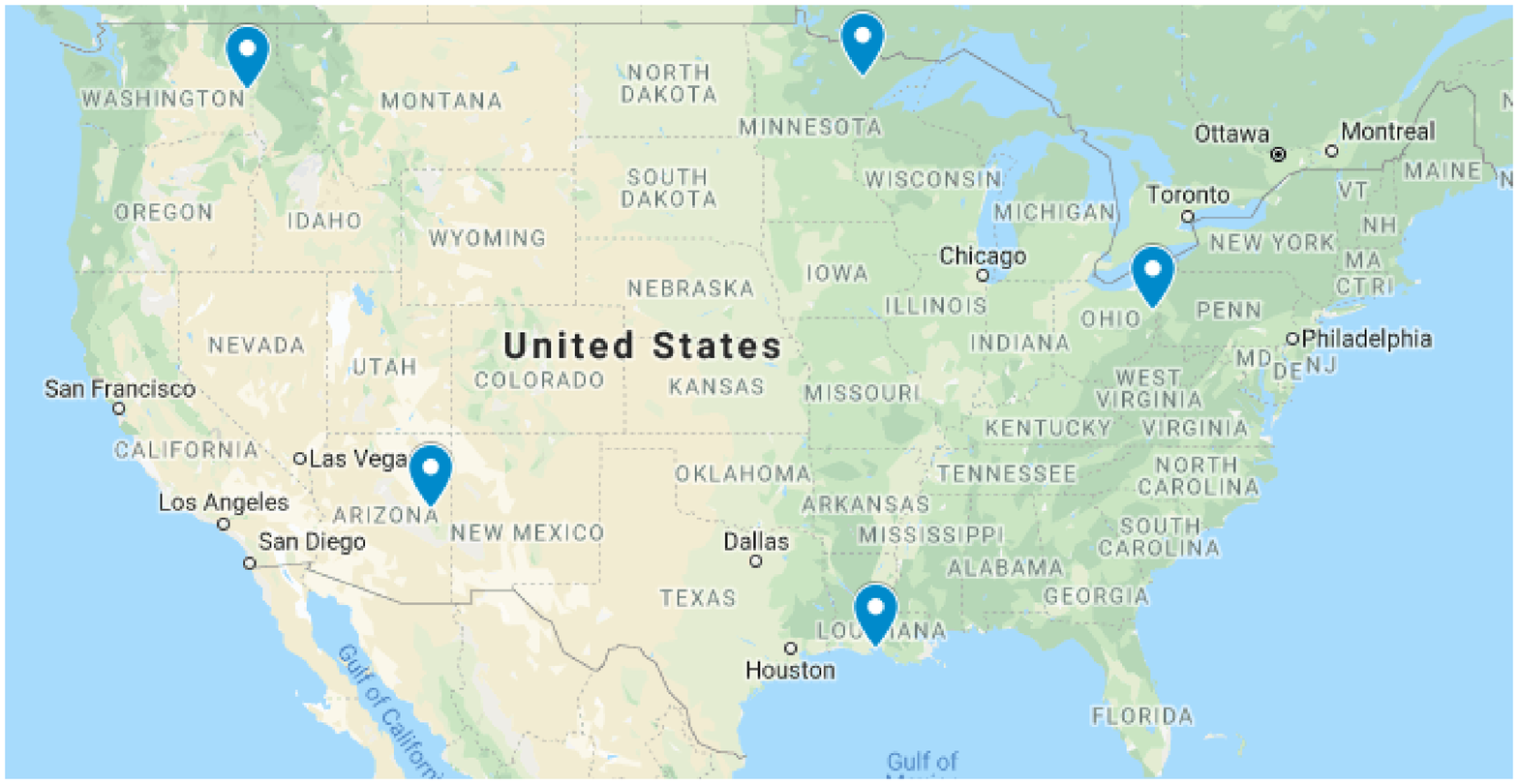} }
  \subfigure [Battery requirement] 
  {\includegraphics[width = 0.3\textwidth,trim={1cm 0 0 0}]{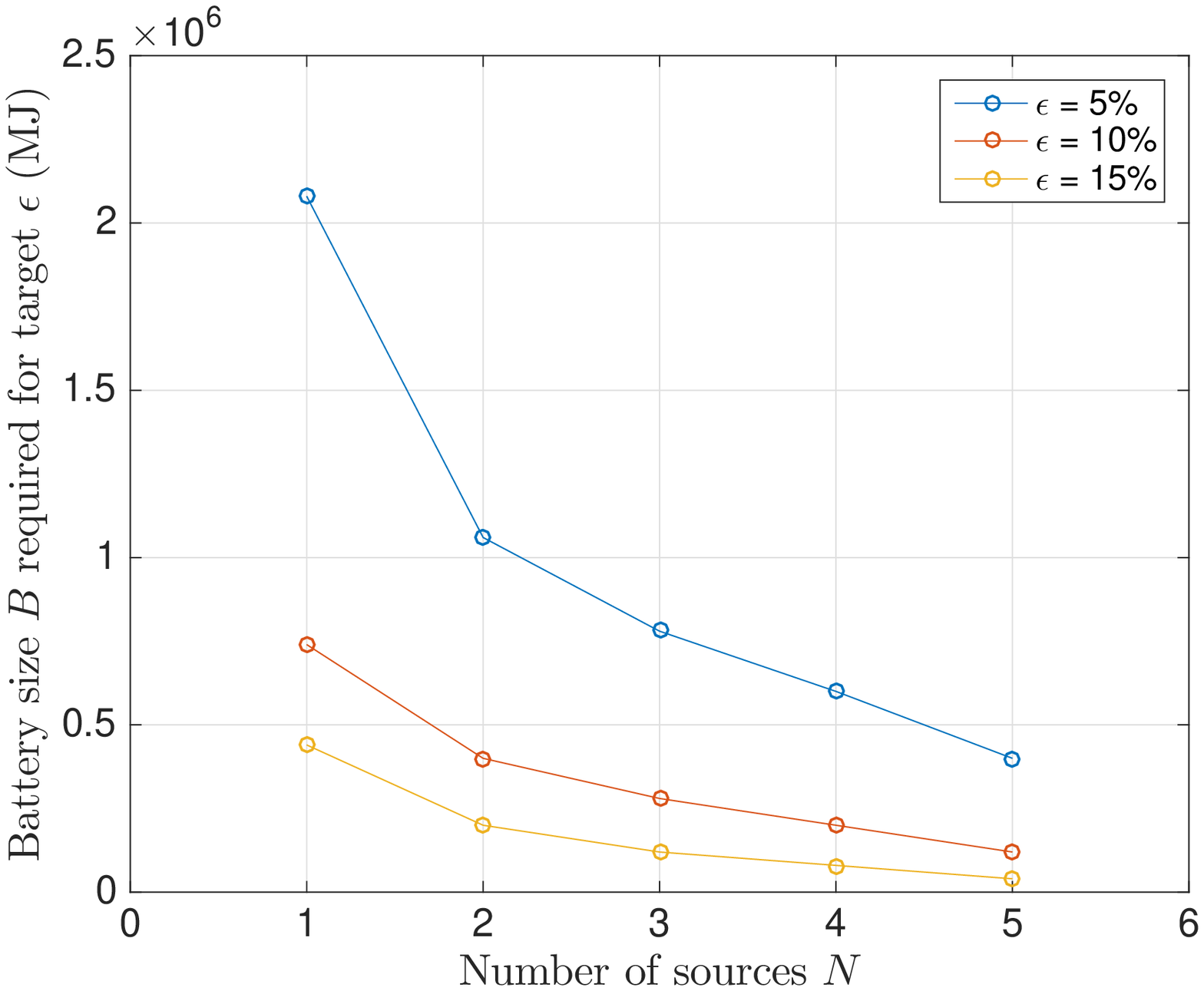} }
  \caption{Scenario 4: Locations over 1000 km from one another}
  \label{fig:scenario4}

\end{figure}

\subsection{Simulation methodology}

From the wind power generation data, we construct the net generation
trace as follows. For each location, we assume that the demand is
constant, and equal to 60\% of the (time) average generation. This
corresponds, for example, to a wind generator committing to a steady
power delivery of 60\% of its average generation to the grid, using
battery storage to smoothen the temporal supply
intermittency.\footnote{In this study, we keep the demand side of the
  net generation simple in order to focus on the impact of the
  \emph{supply size intermittency} on battery sizing. Considering more
  realistic demand traces that capture diurnal/seasonal variations
  presents an avenue for future work.}
Note that the resulting average net generation (or drift) at each
location is positive.

To get a sense of the scale of the battery requirement, consider
location~$(a),$ marked on Figure~\ref{fig:scenario1}. The average power
generation at this location equals 6.3498 MW. The battery size required
for this location alone, in order to achieve a target $\LOLP$ of 10\%,
equals 360000 MJ, which is 100 megawatt-hours.\footnote{For reference, as
  of to the best our of knowledge, as of 2018, the largest utility scale lithium-ion battery installation
  was built by Tesla for the Hornsdale Power Reserve in Australia,
  with a capacity of 129
  megawatt-hours. See \cite{url_HornsdalePowerReserve}} A battery
of this size would, when fully charged, be able to meet the demand (60\% of average generation) on
its own for 26.2475 hours! 

Given the net generation time series for each location as described
above, we compute the battery requirement corresponding to a given
subset $\mathcal{S}$ of locations as follows. For a shared battery of
size $B,$ we simulate the temporal evolution of the battery occupancy
driven by the aggregate supply/demand from all locations in
$\mathcal{S}.$ The $\LOLP$ for this scenario is then taken to be the
fraction of time the battery is empty and unable to serve the
(aggregate) demand. We then compute the minimum shared battery size
required by the locations in $\mathcal{S}$ to achieve a prescribed
$\LOLP$ target of $\epsilon$ by running the above simulation for
various battery sizes~$B.$

With the above methodology, our case study is presented as follows. We
explore the economies of scale in battery sharing between different
sets of locations, ranging from sets where all locations are less than
200 km apart, to sets where all locations are over 1000 km apart. For
each set of locations, since there are multiple subsets of any given
size, we plot the maximum battery requirement in order to meet an
$\LOLP$ target of $\epsilon,$ among all subset of the locations having
size $N$ (see the right panels of
Figures~\ref{fig:scenario1}--\ref{fig:scenario6}). Note that the
scaling of the battery requirement with $N$ gives an indication of the
economies of scale obtained via battery sharing. At one extreme, if
the required battery size grows proportionately with $N,$ then there
are no economies of scale from sharing. On the other hand, if the
battery requirement stays relatively insensitive to $N,$ that would
imply substantial economies of scale (with an $N$-fold reduction in
storage requirement thanks to sharing).




\subsection{Observed economies of scale}

We first consider locations that are close by. We selected 4
locations, placed approximately at the vertices of a square, such that
each pair of locations is at most 200~km apart, as shown in
Figure~\ref{fig:scenario1}. We found that the battery requirement
associated with the `worst case' subset of size $N$ among these
locations grows nearly proportionately with $N.$ This suggests that
the wind generation from locations this close are strongly correlated,
providing little opportunity for economy from battery sharing.

Next, we consider four locations, such that the pairwise distances are
between 200--500~km; see Figure~\ref{fig:scenario2}. Once again, we
see that the battery requirement grows with $N,$ though sub-linearly
(as opposed to the near-linear growth in the previous scenario). This
suggests that there are modest economies of scale to be obtained in
this scenario.
Interestingly, the economies of scale are more pronounced for the less
stringent $\LOLP$ targets (10\% and 15\%). We believe that this is because the
`degree of independence' between the net generation processes varies
with the `rareness' of the events that cause loss of load. When the
reliability target $\epsilon$ is small, the $\LOLP$ is dictated by
rare weather events, which might exhibit strong correlation across
large distances. On the other hand, with a milder the reliability
target $\epsilon,$ the $\LOLP$ is dictated by relatively frequent and
modest dips in the net generation process, which are (perhaps) less
correlated across large distances.

Next, we consider the scenarios depicted in
Figures~\ref{fig:scenario3} and~\ref{fig:scenario4}. The pairwise
distances between locations is between 500--1000 km in the former
scenario, and more than 1000 km in the latter. We see here
that there are battery requirement does not grow with $N$, revealing
remarkable economies of scale. In fact, when the locations are over
1000 km apart, we notice that the shared storage requirement actually
shrinks with the number of locations $N.$ While a meterological
explanation of this phenomenon is beyond the scope of this paper, our
results suggest a negative correlation between wind
generation at locations very far apart.

Finally, we consider two scenarios using a larger set of locations
placed along a grid. In the first, we consider a grid of 11 locations,
such that the adjacent locations are around 200 km apart (see
Figure~\ref{fig:scenario5}). In the second, 11 locations are chosen
such that the adjacent locations are around 500 km apart (see
Figure~\ref{fig:scenario6}).  In these scenarios, we notice that
battery requirements seem to grow with $N$ when $N$ is small
(consistent with our earlier observations), but plateau as $N$
becomes larger. This is because when $N$ is larger, all subsets of
size $N$ in the grid contain locations that are further apart compared
to the grid spacing, contributing to the economies of scale. Moreover,
note that the economies of scale are more prominent under the less
stringent $\LOLP$ targets (10\% and 15\%), also consistent with our
earlier observations.

\begin{figure}[htp]
  
  \centering \subfigure [Locations]
  {\includegraphics[width = 0.3\textwidth,trim={1cm 0 4cm 0},clip]{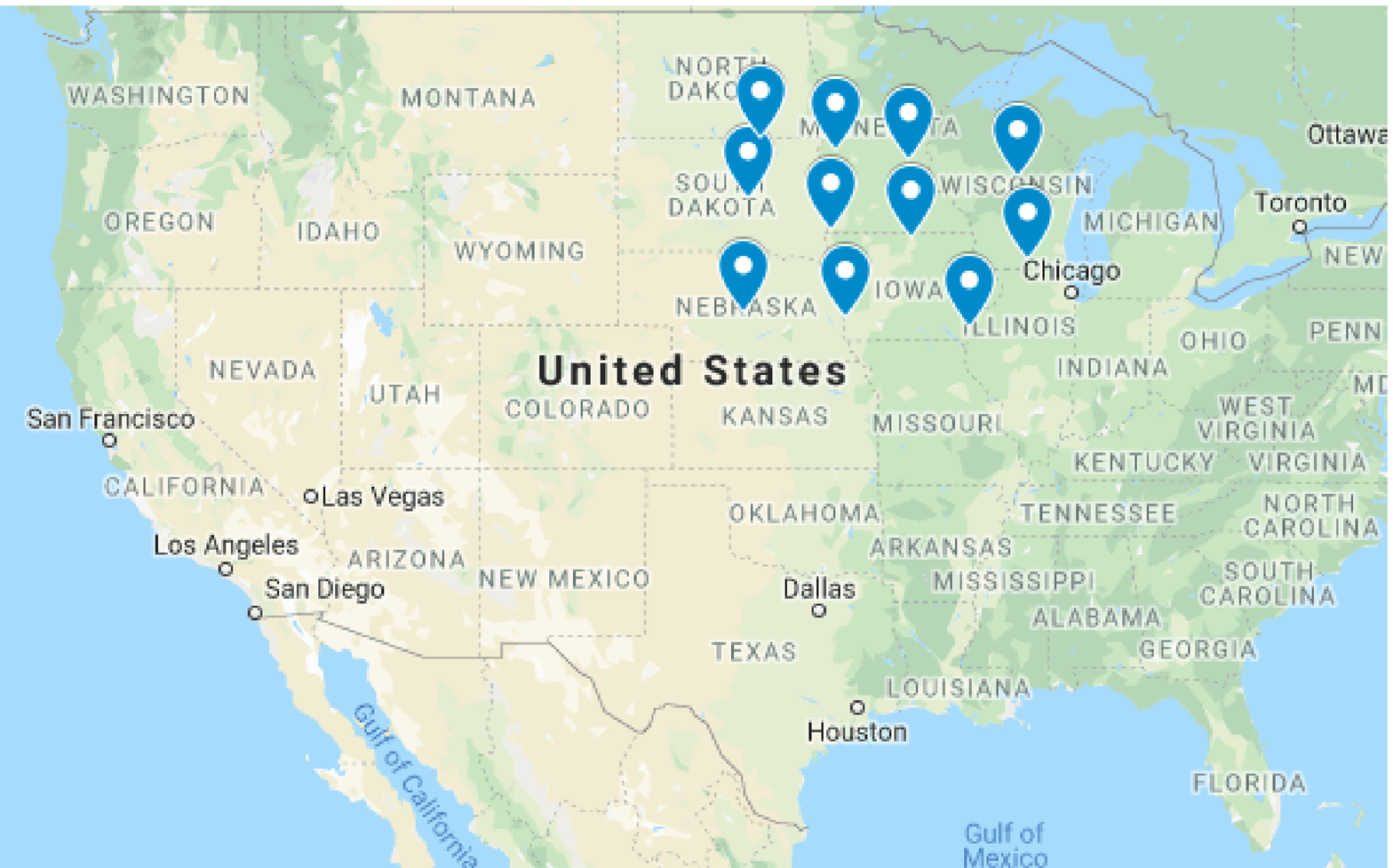} }
  \subfigure [Battery requirement]
  {\includegraphics[width = 0.3\textwidth,trim={1cm 0 0 0}]{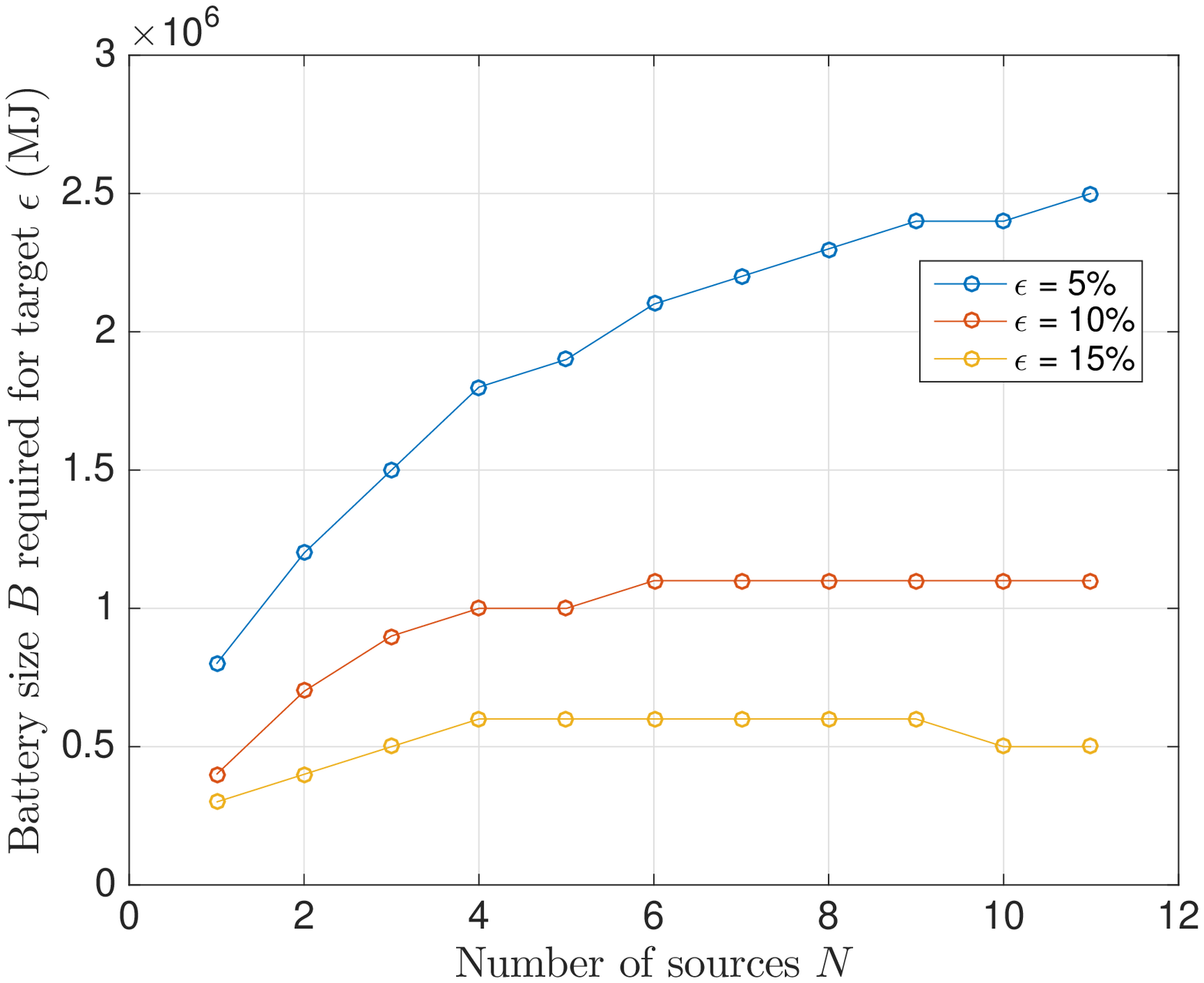} }
  \caption{Scenario 5: Grid with adjacent locations around 200 km apart}
  \label{fig:scenario5}

  \centering 
  \subfigure [Locations]
  {\includegraphics[width = 0.3\textwidth,trim={1cm 0 4cm 0},clip]{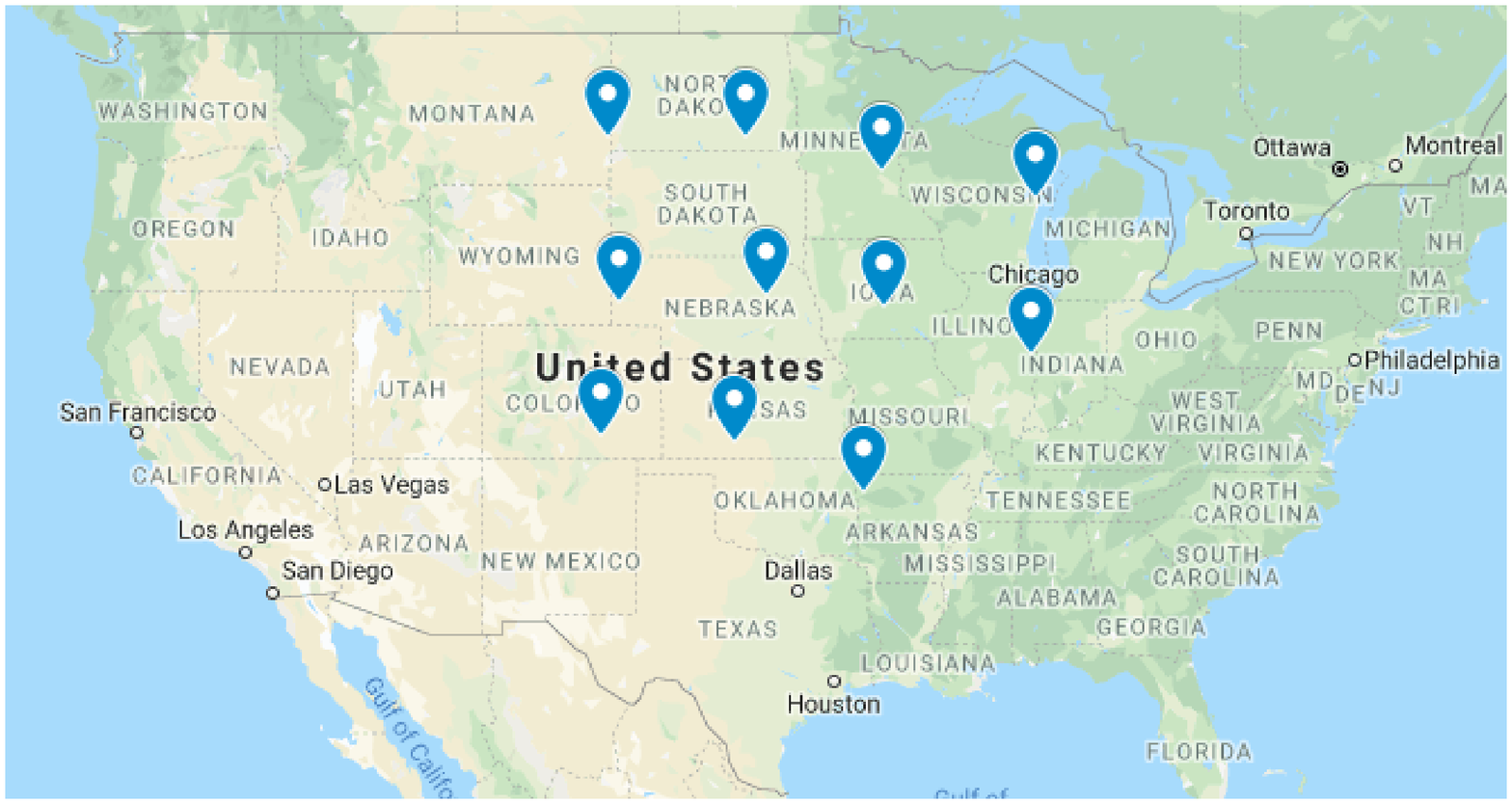} }
  \subfigure [Battery requirement]
  {\includegraphics[width = 0.3\textwidth,trim={1cm 0 0 0}]{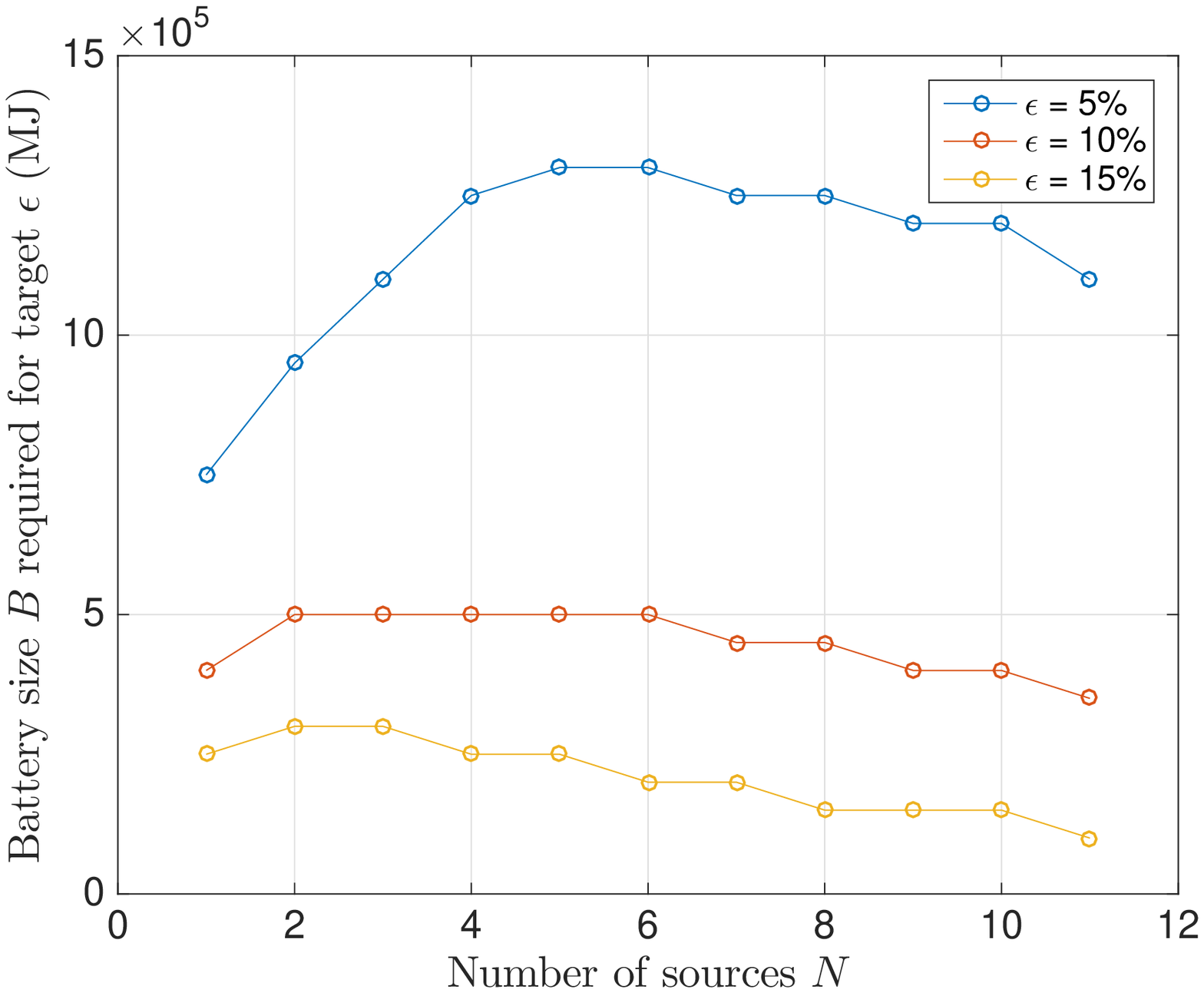} }
  \caption{Scenario 6: Grid with adjacent locations around 500 km apart}
  \label{fig:scenario6}

\end{figure}

To summarize, our case study shows that there are significant
economies of scale to be obtained in practice by sharing a battery
between wind generators even a few hundred kilometers apart. Given the
substantial cost and volume of storage required to `smoothen' the
intermittency in generation at any single location, realizing these
economies of scale would be crucial in order to achieve a high
penetration of renewable generation in the power grid.

\ignore{ It is clear that battery size $B$ required to achieve the
  target $LOLP$ of $\epsilon$ keeps increasing with $N$.  Next through
  simulations, we figured out the battery requirement to achieve 5\%
  LOLP associated with the time series data for a particular location
  . Keeping this battery size in mind, the net generation associated
  with the time series data of other locations are scaled
  appropriately so that the battery size required to achieve target
  LOLP of 5\% is same at any location. For each values of $N$, where
  $N\in \{1,2,3,4,5\}$ indicates the number of locations sharing the
  common battery, all possible combinations of $N$ users (out of the
  total locations in consideration) are considered and in each case
  battery needed to achieve the target LOLP of 10\%, 5\% and 1\% is
  figured out via simulation. The results obtained are plotted in
  \Cref{fig:B_req1}-\ref{fig:B_req6}.  In \Cref{fig:B_req1}, the
  distance between any two locations is less than 200 Km and hence
  there is more correlation among the data sets from these
  locations. Note that for our economies of scale to kicks in we need
  independence among the locations sharing the common
  battery. Consequently, the battery required for a target LOLP keeps
  increasing as number of users sharing it increases
  \Cref{fig:B_req1}. Same is the case when the distance is in the
  range of 200-500 Km \Cref{fig:B_req2}.  For distance range of
  500-1000 Km, we expect that the wind data is less correlated (or
  independent) and consequently economies of scale starts kicking in
  over this range \Cref{fig:B_req3}. Note that at this distance range,
  battery requirement is roughly flat as more and more locations are
  added up which are sharing the common battery
  \Cref{fig:B_req3}. Beyond 1000 Km range, we get a clear indication
  of independence and the same is evident in
  \Cref{fig:B_req4}. Battery requirement here is, in fact, decreasing
  as more and more users turning up to share it \Cref{fig:B_req4}.

\Cref{fig:B_req5}-\ref{fig:B_req6} indicates the similar results with more number of locations in consideration. Adjacent locations are roughly 200 Km apart in \Cref{fig:B_req5} while they are roughly 500 Km in \Cref{fig:B_req6}. For any fixed number of locations $N$ sharing the common battery, we computed the all possible combination of $N$ among total number of locations considered. The worst case (i.e. maximum) battery requirement for a target LOLP is plotted. Clearly the the adjacent locations are highly correlated and will always results in maximum (or worst case) battery requirement for any fixed $N$. Therefore economies of scale was not achieved in \Cref{fig:B_req5}-\ref{fig:B_req6}.

\begin{figure*}
\begin{minipage}[b]{.5\linewidth}
\centering\includegraphics[scale=0.45,trim={1cm 0 0
         0}]{images/2states_logLOLPvsB.eps}
\subcaption{}
\label{fig:2states_logLOLPvsB}
\end{minipage}
\begin{minipage}[b]{.5\linewidth}
\centering\includegraphics[scale=0.45,trim={1cm 0 0
         0}]{images/2states_B_req.eps}
\subcaption{}
\label{fig:2states_B_req}
\end{minipage}%
\caption{Two state example from Section~\ref{sec:twostate}}
\label{fig:2state}
\end{figure*}

In this section, we empirically evaluate the statistical economies of
scale from battery sharing between independent users via Monte Carlo
simulations.

We begin with a toy example where each user has one surplus state and
one deficit state. This small state space allows us to perform
simulations with a larger number of users (note that the state space
of the vector process~$X(\cdot)$ grows exponentially with $N$). We
then consider Markov models trained using real world wind traces.

\subsection{Two state example}
\label{sec:twostate}
We take the $N$ background processes to be i.i.d. DTMCs with two
states. The transition probability matrix corresponding to each
background process is taken as 
\begin{equation}\label{eq:matrix_T_R}
   T=\begin{bmatrix} 
    3/5 & 2/5 \\
    2/3 & 1/3 
    \end{bmatrix}.
\end{equation}
For each user, State~1 is a net deficit state, with $r_i(1) = -1,$ and
State~2 is a net surplus state with $r_i(2) = 1.$ It is easy to verify
that each user has a positive drift with these settings.

For $N = 1,2,4,7,10,$ we plot the logarithm of the $\LOLP$ versus $B$
obtained via Monte Carlo simulations in
\Cref{fig:2states_logLOLPvsB}. We note that $\log(\LOLP)$ decreases
linearly with $B,$ which is consistent with an exponential decay of
$\LOLP.$ Moreover, we note that the slope of the $\log(\LOLP)$ curve,
which is also the exponential decay rate of $\LOLP$ with $B,$ seems
insensitive to $N,$ also consistent with our analytical findings.

Next, we plot the size of the battery required in order to meet
different $\LOLP$ thresholds as a function of the number of users $N$
in \Cref{fig:2states_B_req}. We note that the battery size requirement
is nearly flat, demonstrating the tremendous economy of scale to be
had from battery sharing.

\subsection{Markov models trained from real wind data}
\label{sec:realworldcasestudy}
We now consider a Markov model trained using real world wind
generation traces. We obtained time series data corresponding to wind
power generation over three years (December 2014 to December 2017)
within the jurisdiction of the Bonneville Power Administration (BPA)
(see \cite{url_wind_data}). The data samples are five minutes apart,
and range from 0 to 4500 MW. To speed up Monte Carlo simulations, we
subsampled the data to get samples one hour apart (since the
generation tends to be steady over hourly intervals). The data was
then scaled multiplicatively to lie in the interval [0,180], and
quantized into eight bins with bin edges [0, 6, 12, 24, 36, 72, 108,
  144, 180]. This non-uniform binning was performed to ensure a nearly
flat histogram across bins. Taking the bin centers to be the
generation states, we subtract a constant demand $d = 40$ to obtain an
integer valued net generation process with a positive time
average. Treating the net generation values themselves as background
states, we compute the empirical transition probability matrix
\begin{equation*}
 T_{i,j}= \frac{\text{\# transitions occurring from state $i$ to state
     $j$}}{\text{total \# transitions occurring out of state $i$}}.
\end{equation*}

Treating the background process of each user as an 8-state DTMC with
the above transition probability matrix, and using the above net
generation values, we present results of our Monte Carlo simulations
in \Cref{fig:wind}. Note from Figure~\ref{fig:wind_logLOLPvsB} that
$\log(\LOLP)$ decreases linearly with batter size, with a slope that
is nearly constant across different values of $N,$ as
expected. Remarkably, the battery requirements for meeting the
reliability thresolds $\epsilon =$ 0.1, 0.05, and 0.01, actually
decrease with the number of sharing users
(see~\Cref{fig:wind_B_req}). This suggests that substantial economies
of scale are achievable by sharing (expensive) battery resources
between independent uncertain supply/demand processes in practice.

\begin{figure*}
\begin{minipage}[b]{.5\linewidth}
\centering\includegraphics[scale=0.45,trim={1cm 0 0
    0}]{images/wind_logLOLPvsB.eps}
\subcaption{}\label{fig:wind_logLOLPvsB}
\end{minipage}
\begin{minipage}[b]{.5\linewidth}
\centering\includegraphics[scale=0.45,trim={1cm 0 0
         0}]{images/wind_B_req.eps}
\subcaption{}\label{fig:wind_B_req}
\end{minipage}
\caption{Simulation results for Markov model trained using real world
  wind data}\label{fig:wind}
\end{figure*}

\begin{figure}[h]
     \centering \includegraphics[scale=0.45,trim={1cm 0 0
         0}]{images/2states_B_req.eps}
       \caption{Two states process $X_i$: Battery size B required to achieve target $\LOLP=\epsilon$ when different number of users ($N$) are in action. This plot is for R=[-1 0; 0 2]; T=[3/5 2/5; 2/3 1/3]}
       \label{fig:2states_B_req}
 \end{figure}
  
 \begin{figure}[h]
     \centering \includegraphics[scale=0.45,trim={1cm 0 0
         0}]{images/2states_logLOLPvsB.eps}
       \caption{Two states process $X_i$: $\log\LOLP$ vs $B$ when different number of users are in action. Dotted line with slope equals smallest positive eigenvalue of $R^{-1}Q^T$ is not possible here since such characterization was for cont time . This plot is for R=[-1 0; 0 2]; T=[3/5 2/5; 2/3 1/3]}
       \label{fig:2states_logLOLPvsB}
 \end{figure}
 }

\ignore{ 
 \begin{figure}[h]
     \centering \includegraphics[scale=0.45,trim={1cm 0 0
         0}]{images/wind_B_req.eps}
       \caption{Wind process $X_i$: Battery size $B$ required to achieve target $\LOLP=\epsilon$ when different number of users $N$ are in action.}
       \label{fig:wind_B_req}
 \end{figure}
  
 \begin{figure}[h]
     \centering \includegraphics[scale=0.45,trim={1cm 0 0
         0}]{images/wind_logLOLPvsB.eps}
       \caption{Wind process $X_i$: $\log\LOLP$ vs battery size$B$ when number of users in action are $N\in\{1,2,3,4,5\}$.}
       \label{fig:wind_logLOLPvsB}
 \end{figure}
}

\section{Methods: A mathematical foundation for economies of scale in
  battery sharing}
\label{sec:methods}

In the previous section, we made a remarkable observation: When a
battery is shared between $N$ wind generators over 500~km apart, with
each generator contracted to supply 60\% of its average generation to
a prescribed level of reliability, the shared battery size required
does not scale with $N.$ While some economy of scale is to be expected
given the statistical diversity in wind generation across far away
locations, the extent of the observed economy (an $N$-fold reduction
in the volume of storage required) is surprising, and warrants a sound
analytical explanation. 

In this section, we propose a mathematical model with $N$ users,
equipped with their own stochastic net generation (generation minus
demand) process, sharing a common battery of size $B.$ Assuming that
the net generation processes of the users are \emph{statistically
  independent}, we show that the value of $B$ needed to ensure a
prescribed level of reliability is indeed insensitive to $N,$ under a
certain large deviations approximation. This result provides a formal
explanation for our empirically observed economies of scale: Wind
generators that are sufficiently far apart essentially behave as
though their generation processes are statistically independent (from
the standpoint of sizing of a shared battery).

\subsection{Notation}
If $Z(k), k\in \mathbb{Z}$ is a stochastic process with a steady state
distribution $\mu$, we use $Z(\cdot)$ to denote the process and $Z$ to
denote a random variable such that $Z$ has distribution $\mu$. For
sequences $\{x_n\}_{n\in \Nbb}$ and $\{y_n\}_{n \in \Nbb},$ we say
$x_n \sim y_n$ if $\limn \frac{x_n}{y_n}=1.$ For $m \in \mathbb{N}$,
$[m] := \{0,1,2,\cdots,m\}.$

\subsection{Model}
Consider $N$ users denoted $1,\hdots,N$ and let $\{X_i(k)\}_{k \in
  \mathbb{Z}}$ be a background stochastic process associated with user
$i$. At any time $k$, let the \textit{net generation} associated with
user $i$ be a function of the background process, denoted $r_i(X_i(k))
\in \mathbb{Z}.$\footnote{We are assuming here that there is a
  positive granularity with which energy generation/demand are
  measured; this granularity is taken to be 1 without loss of
  generality.} Technically, $r_i(X_i(k)) = g_i(k)-d_i(k),$ where
$g_i(\cdot)$ is the energy generation process and $d_i(\cdot)$ is the
energy demand process of user $i$. We assume that for each $i$,
$X_i(\cdot)$ is an irreducible discrete-time Markov chain (DTMC) over
a finite state space $S_i$ and that the processes $X_1(\cdot),\hdots,
X_N(\cdot)$ are independent. We use $X$ to denote the $N$-tuple of the
background DTMCs,
\[X(k):=(X_1(k),\hdots,X_N(k)),\] and $S=\prod_{i=1}^N S_i$ to denote
the state space of $X(\cdot)$. Let the total generation of the $N$
processes be
\[r(X(k)) := \sum_{i=1}^N r_i(X_i(k)) = \sum_{i=1}^N \left
  (g_i(k)-d_i(k)\right ).\] Let $\pi_i=(\pi_i(s))_{s\in S_i}$ denote
the steady-state distribution of $X_i(\cdot)$. We define the
\textit{drift} $\Delta_i$ associated with user~$i$ as the steady state
average net generation, i.e.,
 \[\Delta_i:= \sum_{s\in S_i}r_i(s) \pi_i(s). \]
 We assume $\Delta_i > 0$ for all $i.$ Clearly, when $\Delta_i>0$, the
 long-run time-averaged generation is greater than the long-run
 time-averaged demand. Define the total drift $\Delta$ as
 \[\Delta := \sum_{i=1}^N \Delta_i = \sum_{s \in S} \pi(s)r(s),\]
 where $\pi$ denotes the stationary distribution of the vector process
 $X(\cdot).$ For technical reasons, we make the following assumptions:
 \begin{enumerate}
 \item[A1.] $\Pbb[X_i(k+1)=s\ |\ X_i(k)=s] > 0$ for all $s \in S_i,$
and    $1 \leq i \leq N.$
 \item[A2.] No user always generates a net energy surplus, i.e.,
   $|\{s \in S_i:\ r_i(s) < 0\}| > 0$ for all $i.$ 
 \end{enumerate}
 Assumption~A1 ensures that each of the DTMCs $X_i(\cdot)$ is
 aperiodic. Note that Assumption~A2 implies that the system on the
 whole does not always enjoy a net energy surplus, i.e., $\quad |\{s
 \in S:\ r(s) < 0\}| > 0.$ Together, Assumptions~A1 and~A2 ensure that
 the battery evolution (described next) is non-trivial.

 Consider a storage battery of capacity $B \in \mathbb{N}$ which is
 charged/discharged according to the net generation process
 $r(X(\cdot))$. The amount of charge in the battery $b(k)$ evolves as
 a deterministic function of $r(X(k))$, regulated between the upper
 cap $B$ and lower cap $0$. Thus, $b(\cdot)$ evolves as
\begin{equation}
\label{eq:bdot}
b(k+1) = \left[b(k) + r(X(k))\right]_{[0,B]},
\end{equation}
where $[z]_{[0,B]} = \min(\max(z,0),B)$ denotes the projection of~$z$
over the interval $[0,B].$ The dynamic above arises from a greedy operation
of the battery subject to the boundary constraints: we charge the
battery using excess generation whenever feasible, and meet the user
demands whenever feasible.
%
It is easy to see that $\{(b(k),X(k))\}$ is a discrete-time Markov
process that evolves over the state space $[B]\times S.$ Under
Assumptions~A1 and~A2, this Markov process has a well defined steady
state distribution.\footnote{Note that certain battery occupancies in
  $[B]$ may not be reachable starting with a full/empty battery. For
  example, if the net generation values $\{r(s):\ s \in S\}$ are all
  even and $B$ is even. However, under Assumptions~A1 and~A2, there is
  a unique recurrent class in $[B]\times S$ over which the steady
  state distribution is supported. This recurrent class includes both
  empty as well as full battery occupancy.}

\subsection{Loss of load probability}
Let $(b,X)$ be a random vector distributed according to the steady
state distribution of $\{(b(k),X(k))\}.$ A \textit{loss of load event}
is an event where at least one user's demand cannot be met. The
\textit{loss of load probability}, or $\LOLP$, is the probability of
this event under the steady state distribution. It is thus the
probability that the battery occupancy is insufficient to meet all
user demand, i.e.,
\begin{equation}
\LOLP = \Pbb[b + r(X) <0].\label{eq:lolpdef} 
\end{equation}

Note that the system $\LOLP$ is the long run fraction of time at least
one user is unable to meet its demand. The loss of load probability
of particular user $i,$ say $\LOLP_i,$ is well defined only once a
scheduling rule is specified for how to allocate the available energy
between demanding users when a loss of load event occurs. However,
under any scheduling policy, $\LOLP_i \leq \LOLP,$ and so ensuring
$\LOLP < \epsilon$ implies $\LOLP_i < \epsilon$.

\ignore{
$\LOLP$ can be characterized by the stationary distribution of the
process $\{(b(t),X(t))\}.$ To this end, partition the state space $S$
as follows:~$S=S_+\cup S_-$, where
\begin{equation*}
    S_+=\{s\in S:r(s)>0\}, \quad S_-=\{s\in S:r(s)<0\}.
\end{equation*} 
(recall that $0 \notin r(S)$). 
We assume that both $S_+$ and $S_-$ are non-empty.\footnote{Indeed,
  if either $S_+$ or $S_-$ is empty, then the battery would forever
  remain completely charged or completely discharged.}
Denote
\begin{equation*}
  F_s(x)=\mathbb{P}[b\leq x,X=s]\quad \forall\ s \in
  S,\ x\in[0,B].
\end{equation*}
and $F(\cdot)=[F_{s_1}(\cdot), F_{s_2}(\cdot),\hdots, F_{s_{|S|}}(\cdot)]\t.$ Clearly, $F$ satisfies the ODE
\begin{equation}
  \frac{d}{dx}{F}(x) = R\inv Q\t F(x), \qquad \forall x\in(0,B),
  \label{eq:fdot} 
\end{equation}
where $Q$ denotes the transition rate matrix associated with the CTMC
$X(\cdot)$ and $R:=\diag(r(s))_{s\in S}$ (see
\cite{anick1982stochastic, Mitra88}).\footnote{Since $r(s)
  \neq 0$ for all $s \in S,$ $R^{-1}$ exists.}
$F$ must also satisfy the 
boundary conditions:
\begin{equation}
    F_s(0)=0\ \ \forall\ s\in S_+,\ F_s(B)=\pi_s\ \forall\ s\in
    S_-\label{eq:boundary_conditions},
\end{equation}
where $\pi$ denotes the steady state distribution of $X(\cdot).$
It follows that 
\begin{equation}
\LOLP = \sum_{s \in S} F_s(0) = \sum_{s \in S_-} F_s(0), \label{eq:LOLP_eqn}
\end{equation}
where the last equality follows from
\eqref{eq:boundary_conditions}. The ODE \eqref{eq:fdot} and the
boundary conditions \eqref{eq:boundary_conditions} uniquely identify
$F$ and thereby define the $\LOLP.$ }

While the $\LOLP$ can be computed numerically in a routine manner by
solving the system of linear equations that define the stationary
distribution of the DTMC $\{(b(k),X(k))\},$ this does not yield any
insight on the behavior of $\LOLP$ with~$B$ or with $N$. In the
following section, we characterize the asymptotic behavior of $\LOLP$
with increasing $B$. The calculation of the large deviations exponent
of $\LOLP$, and thereby showing its invariance with $N$ is the central
contribution of this paper.

\subsection{Large Deviations Decay Rate Scaling}
\label{sec:results}

In this section, we analyse how the large deviations decay rate of the
$\LOLP$ with respect to the battery size $B$ scales with the number of
users $N.$ This in turn allows us to capture the scaling of the
battery size requirement (in order to meet a given reliability
threshold) with the number of users.

Specifically, we characterize the decay rate $\lambda,$ such that
$\log(\LOLP) \sim -\lambda B$ as $B \ra \infty$ using large deviations
theory. This decay rate characterization implies that a battery size
of approximately $B^{\epsilon} := \frac{1}{\lambda} \log(1/\epsilon)$
would suffice in order to meet an $\LOLP$ target of~$\epsilon.$ Our
main finding is that $\lambda \geq \min_{1 \leq i \leq N} \lambda_i,$
where $\lambda_i$ denotes the decay rate associated with the user~$i$
operating individually (i.e., without battery pooling with the
remaining users and with background process $\{X_i(\cdot)\}$). This
means that $B_{\epsilon} \leq \max_{1 \leq i \leq N} B^{\epsilon}_i,$
where $B^{\epsilon}_i := \frac{1}{\lambda_i} \log(1/\epsilon)$ denotes
the large deviations estimate of the battery size requirement for
user~$i$ operating individually. This insensitivity of $B_{\epsilon}$
with $N$ suggests a tremendous statistical economy of scale that can
be obtained by sharing a common battery between multiple sources.

Recall from the introduction that this economy of scale is contingent
on the assumption that the background processes $\{X_i(\cdot)\},
i=1,\hdots,N$ are independent. It is easy to see from our formulation
that if these processes are instead \textit{identical}, \ie, equal on
all sample paths, the large deviations decay rate will scale inversely
proportional to $N$, thereby offering no economy of scale.

We now introduce some preliminaries required to state the large
deviations decay rate characterization (Theorem~\ref{thm:ldp}).
Define $U_0 = 0,$ $$U_k := \sum_{j=1}^k -r(X(-j)) \quad (k \geq 1).$$
The process $\{\nicefrac{U_k}{k}\}$ satisfies a large deviations
principle (LDP) (see \cite[Section~2.3]{Dembo1998}), with a rate
function that is defined in terms of the following function.
\begin{equation*}
  \label{eq:Lambda}
 \Lambda(\theta):=\lim_{k \ra \infty} \frac{\log \Exp{e^{\theta
       U_k}}}{k}.
\end{equation*}
That $\Lambda(\cdot)$ is well defined, i.e., the limit in the above
definition exists for all $\theta,$ is shown in in
Section~\ref{app:Lambda}.

We are now ready to state the decay rate characterization for the
$\LOLP$ with respect to the battery size $B.$
\begin{theorem}
Let $X_i(\cdot), i=1,\hdots,N$ be independent DTMCs, let the battery
be governed by the process $r(k)\equiv r(X(k))$ in \eqref{eq:bdot},
and define $\LOLP$ as in \eqref{eq:lolpdef}.  If $\Delta_i>0$ for
each~$i$, then,
  \label{thm:ldp}
  $$\lim_{B \ra \infty} \frac{\log \LOLP}{B} = -
  \lambda,$$ where
  \begin{equation}
  \label{eq:decay_rate}
  \lambda := \sup\{\theta > 0\ :\ \Lambda(\theta) < 0\} \in
  (0,\infty).
\end{equation}
\end{theorem}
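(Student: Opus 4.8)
The plan is to reduce $\LOLP$ to the tail of the all-time maximum of the negative-drift random walk $\{U_k\}$, and then to estimate that tail by a Chernoff bound together with an exponential change of measure. First I would establish a Loynes/Lindley-type representation of the stationary battery occupancy $b$. Iterating \eqref{eq:bdot} backward from time $0$ and using that ``battery full'' ($b=B$) recurs almost surely (the recurrent class of $\{(b(k),X(k))\}$ contains a full-battery state, by Assumptions~A1--A2), one sees that on the stretch since the last full-battery epoch only the lower reflection at $0$ is active, so $b$ equals a singly-reflected walk driven by the reversed increments $r(X(-1)),r(X(-2)),\dots$ started from level $B$. Hence, up to $O(1)$ corrections from the boundedness of $r(\cdot)$ on the finite state space, $\{b$ is small$\}$ is the event that $U_k=\sum_{j=1}^k -r(X(-j))$ climbs to within $O(1)$ of level $B$ before being reset by a return to $0$. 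Since $\Delta=\sum_i\Delta_i>0$ we have $U_k\to-\infty$ a.s., such excursions are rare, and a standard last-exit/ladder decomposition shows the per-excursion probability of reaching $\approx B$ agrees, to exponential order in $B$, with $\Pbb[\sup_{k\ge0}U_k\ge B]$ for the free walk. Conditioning $\LOLP$ in \eqref{eq:lolpdef} additionally on the current increment being a deficit (which has positive stationary probability since $|\{s\in S:r(s)<0\}|>0$) costs only a constant factor and does not affect the exponent, so this step would give
\begin{equation*}
\lim_{B\ra\infty}\frac{\log\LOLP}{B}=\lim_{B\ra\infty}\frac{\log\Pbb[\sup_{k\ge0}U_k\ge B]}{B}.
\end{equation*}

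For the upper bound I would use a union bound and a Chernoff estimate: for $0<\theta<\lambda$ we have $\Lambda(\theta)<0$, and since $\Lambda(\theta)=\lim_k\frac1k\log\Exp{e^{\theta U_k}}$ (well-defined per Section~\ref{app:Lambda}, finite for all $\theta$ because $S$ is finite), $\Exp{e^{\theta U_k}}\le Ce^{(\Lambda(\theta)+o(1))k}$; therefore $\Pbb[\sup_k U_k\ge B]\le\sum_k e^{-\theta B}\Exp{e^{\theta U_k}}\le C'e^{-\theta B}$, and $\theta\ua\lambda$ yields $\limsup_B\frac1B\log\Pbb[\sup_kU_k\ge B]\le-\lambda$. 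For the lower bound I would exponentially tilt the modulating chain: at tilt $\theta$ slightly below $\lambda$ the twisted Markov-additive process has $\Lambda'(\theta)>0$, i.e.\ positive drift, so under the tilted law $\{U_k\}$ reaches level $B$ in $O(B)$ steps with probability bounded away from $0$; undoing the change of measure costs a likelihood ratio $e^{-\theta B(1+o(1))}$, and $\theta\ua\lambda$ gives $\liminf_B\frac1B\log\Pbb[\sup_kU_k\ge B]\ge-\lambda$. (Equivalently the lower bound can be routed through the Markov-chain LDP for $\{U_k/k\}$ and the contraction principle, optimizing the rate function over excursion paths that reach level $1$ after rescaling by $B$.)

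It remains to check $\lambda\in(0,\infty)$: $\Lambda$ is convex with $\Lambda(0)=0$ and $\Lambda'(0)=\Ebb[-r(X)]=-\Delta<0$ (using $\Delta_i>0$ for all $i$), so $\Lambda<0$ on a right-neighbourhood of $0$ and $\lambda>0$; and Assumptions~A1--A2 provide a state with negative net generation at which the product chain can dwell for arbitrarily long stretches, so $U_k=\Theta(k)$ and positive with probability at least $c^k$ for some $c>0$, forcing $\Lambda(\theta)\ra\infty$ and hence $\lambda<\infty$ — so $\lambda=\sup\{\theta>0:\Lambda(\theta)<0\}$ is the unique positive root of $\Lambda$. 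The main obstacle is the first step: making the reflected-random-walk representation of the stationary battery precise while tracking both barriers and extracting the last-exit/ladder structure, and then carrying out the Markov-modulated change-of-measure lower bound correctly (the tilt acts on the transition kernel, with the Perron eigenvector supplying the normalization). By comparison, the Chernoff upper bound and the sign/finiteness facts about $\Lambda$ are routine.
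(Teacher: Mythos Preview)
Your proposal is correct and would succeed, but the route differs from the paper's in one key place. The paper first shows (via a renewal--reward argument under A1--A2) that $\LOLP$ and $\prob{b=0}$ have the same exponential rate, and then invokes Mitra's reversal trick: interchanging generation and demand gives $\prob{b=0}=\prob{b^{\rev}=B}$, and the reversed system is a standard finite-buffer queue with \emph{negative}-drift increments $Y(k)=-r(X(k))$. At that point the paper simply cites known logarithmic large-buffer asymptotics for such queues (Toomey; Ganesh--O'Connell--Wischik, \S6.5) rather than proving them. You instead build a direct Loynes representation of the doubly-reflected process, reduce to $\Pbb[\sup_{k}U_k\ge B]$ for the free walk, and then prove the upper bound by Chernoff/union and the lower bound by an explicit exponential tilt of the modulating chain. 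What the paper's approach buys is brevity and a clean way to sidestep the two-barrier bookkeeping that you (rightly) flag as the main obstacle; what yours buys is self-containment, since it needs neither the reversal symmetry nor the external queueing references, and it makes the role of $\Lambda$ and the Perron tilt explicit. Your argument that $\Lambda(\theta)\to\infty$ via dwelling in a deficit state is also a bit more elementary than the paper's, which lower-bounds the Perron eigenvalue by the trace of a symmetrized matrix.
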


Theorem~\ref{thm:ldp} states that the $\LOLP$ decays exponentially
with respect to the battery size with decay rate $\lambda.$ That
$\lambda \in (0,\infty)$ follows since $\Lambda(\cdot)$ is a
differentiable convex function with $\Lambda(0) = 0,$
$\Lambda'(0) < 0$ and
$\lim_{\theta \ra \infty} \Lambda(\theta) = \infty$ (see
Section~\ref{app:Lambda}).

Theorem~\ref{thm:ldp} can also applied with $N=1$; this corresponds to
the case where a user $i$, with background process $X_i(\cdot)$, uses
a battery of size $B$ such that the battery dynamics \eqref{eq:bdot}
are governed by $r(k) \equiv r_i(X_i(k))$. If $b_i$ is the resulting
steady state battery occupancy in this case, we have
\begin{equation*}
\lim_{B \rightarrow \infty} \frac{\log \Pbb[b_i + r_i(X_i) < 0]}{B} = -
\lambda_i, \label{eq:lambdai}
\end{equation*}
where $\lambda_i := \sup\{\theta > 0\ :\ \Lambda_i(\theta) < 0\},$ and
$$\Lambda_i(\theta):=\lim_{k \ra \infty} \frac{\log \Exp{e^{-\theta
      \sum_{j=1}^k r_i(X_i(-j))} }}{k}.$$ Indeed, Theorem~\ref{thm:ldp}
allows us to compare the decay rate $\lambda$ for the collective
battery operation with the decay rates $\lambda_i$ associated with
standlone operations by the individual users, as is shown in
Corollary~\ref{corr:decay_rate_scaling}.

\begin{corollary} Let the setting of Theorem~\ref{thm:ldp} hold. Then 
\label{corr:decay_rate_scaling}
  $$\lambda \geq \min_{1 \leq i \leq n} \lambda_i.$$
\end{corollary}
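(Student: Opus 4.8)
The plan is to reduce the corollary to Theorem~\ref{thm:ldp} by exploiting the additive structure of $U_k$ over users together with the independence assumption. Writing $U_k^{(i)} := \sum_{j=1}^k -r_i(X_i(-j))$, the definition $r(X(\cdot)) = \sum_i r_i(X_i(\cdot))$ gives $U_k = \sum_{i=1}^N U_k^{(i)}$, and since the background chains $X_1(\cdot),\hdots,X_N(\cdot)$ are independent, so are the processes $U_\cdot^{(1)},\hdots,U_\cdot^{(N)}$. Hence the moment generating function factorizes,
\[
\Exp{e^{\theta U_k}} \;=\; \prod_{i=1}^N \Exp{e^{\theta U_k^{(i)}}},
\]
and taking logarithms, dividing by $k$, and letting $k \ra \infty$ — each of the $N$ limits existing by the argument of Section~\ref{app:Lambda} applied to the single-user chain $X_i(\cdot)$, which is just the $N=1$ instance of the model — yields the key identity $\Lambda(\theta) = \sum_{i=1}^N \Lambda_i(\theta)$ for every $\theta$.

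Next I would invoke the shape of each $\Lambda_i$. Since the single-user setting is the special case $N=1$, the properties recorded after Theorem~\ref{thm:ldp} hold for $\Lambda_i$ as well: it is convex and differentiable with $\Lambda_i(0)=0$, $\Lambda_i'(0) < 0$, and $\Lambda_i(\theta) \ra \infty$ as $\theta\ra\infty$. A convex function with these properties has $\{\theta>0 : \Lambda_i(\theta) < 0\} = (0,\lambda_i)$: the strict sublevel set is an open interval, it is nonempty near $0$ because $\Lambda_i'(0)<0$, it excludes $0$ because $\Lambda_i(0)=0$, and it is bounded because $\Lambda_i\ra\infty$, so it must equal $(0,\lambda_i)$ by the definition $\lambda_i = \sup\{\theta>0:\Lambda_i(\theta)<0\}$. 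Consequently, for any $\theta \in (0,\ \min_{1\le i\le N}\lambda_i)$ we have $\theta < \lambda_i$ hence $\Lambda_i(\theta) < 0$ for every $i$, and summing gives $\Lambda(\theta) = \sum_{i=1}^N \Lambda_i(\theta) < 0$. By the definition \eqref{eq:decay_rate} of $\lambda$ as $\sup\{\theta>0:\Lambda(\theta)<0\}$, this forces $\lambda \ge \theta$; letting $\theta \uparrow \min_{1\le i\le N}\lambda_i$ yields $\lambda \ge \min_{1\le i\le N}\lambda_i$, which is the claim.

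The argument is short because Theorem~\ref{thm:ldp} already carries all the analytic weight; the only points needing care are the interchange of limit and the finite sum in establishing $\Lambda = \sum_i \Lambda_i$ (immediate once each $\Lambda_i$ is a well-defined finite limit) and the identification $\{\theta>0:\Lambda_i(\theta)<0\} = (0,\lambda_i)$ (pure convexity bookkeeping). I do not anticipate a genuine obstacle; if anything, the subtle step is simply confirming that the single-user objects $\Lambda_i,\lambda_i$ inherit exactly the regularity asserted for $\Lambda,\lambda$, which they do since they arise from the same construction with $N=1$.
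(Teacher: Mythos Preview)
Your proposal is correct and follows essentially the same route as the paper: both establish the additive identity $\Lambda(\theta)=\sum_{i=1}^N\Lambda_i(\theta)$ from independence and then use the convexity/shape of the $\Lambda_i$ to conclude. The only cosmetic difference is that the paper evaluates directly at $\lambda_{\min}:=\min_i\lambda_i$ and observes $\Lambda(\lambda_{\min})\le 0$, whereas you work on the open interval $(0,\lambda_{\min})$ and pass to the limit; both arguments rely on the same structural facts about $\Lambda_i$.
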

\begin{proof}
Due to the independence between the background processes $X_i(\cdot),$
it follows that
$$\Lambda(\theta) = \sum_{i=1}^N \Lambda_i(\theta).$$ Defining
$\lambda_{\min} = \min_{1 \leq i \leq n} \lambda_i,$ the statement of
the corollary follows from the observation that
$\Lambda(\lambda_{\min}) \leq 0.$ 
\end{proof}

To interpret Corollary~\ref{corr:decay_rate_scaling}, note that a
smaller value of decay rate implies that the $\LOLP$ decays more
slowly with $B,$ which in turn implies that a larger battery is
required in order to meet a given reliability target. Indeed, the
large deviations estimate of the battery size required to meet a
reliability target of $\epsilon$ is
$\frac{1}{\lambda} \log(1/\epsilon).$
Corollary~\ref{corr:decay_rate_scaling} states that the $\LOLP$ of the
combined system decays at least as fast as the slowest decaying
$\LOLP$ for the $N$ users operating alone. Thus, the battery
requirement for the combined system is at most the `worst case' among
the battery requirements for each of the $N$ users operating alone. Indeed, note that $\lambda > \lambda_{\min}$ if $\lambda_i \neq \lambda_j$ for some $i \neq
j.$

Of course, the above scaling for the battery size requirement rests on
two key approximations:
\begin{enumerate}
\item We work with large buffer asymptotics, which are provably
  accurate only in the limit $\epsilon \da 0.$
\item We provide \emph{logarithmic} rather than \emph{exact} $\LOLP$
  asymptotics. Exact asymptotics of the form $\LOLP \sim C_N
  e^{-\lambda B}$ would more accurately characterize the sensitivity
  of the battery size requirement with $N.$
\end{enumerate}
Finally, we note that the scaling in Theorem~\ref{thm:ldp} is
different from the \emph{many sources scaling} that is common in the
networking literature (see \cite{BigQueues}), where the buffer content
$B$ as well as the number of sources $N$ are scaled proportionately to
infinity. In Theorem~\ref{thm:ldp}, we perform a \emph{large buffer
scaling}, where the number of sources $N$ is held fixed, while the
buffer size $B$ is scaled to infinity.

In the remainder of this section, we present the proof of
Theorem~\ref{thm:ldp}.

\subsection{Proof of Theorem~\ref{thm:ldp}}

Our first observation is that the $\LOLP$ decay rate matches that
associated with the long run fraction of time the battery is empty,
i.e.,
\begin{equation}
  \label{eq:lolp_batt_empty}
  \lim_{B \ra \infty} \frac{\log \prob{b + r(X) < 0}}{B} = \lim_{B \ra
    \infty} \frac{\log \prob{b = 0}}{B}.
\end{equation}
We provide the proof of \eqref{eq:lolp_batt_empty} in
Section~\ref{app:lolp_batt_empty}.

Next, we analyse the large buffer asymptotics of $\prob{b = 0}$ via
the \emph{reversed} system \cite{Mitra88}, which is obtained by
interchanging the role of generation and
demand. Thus, $$r^{\rev}(X(k)) = -r(X(k)),$$ where we use the
superscript $\rev$ to represent quantities in the reversed
system. Moreover, $\Delta^{\rev} = -\Delta.$ Since the original system
is associated with a positive drift ($\Delta > 0$), the reversed
system is associated with negative drift ($\Delta^{\rev} < 0$). It not
hard to see that
\begin{equation}
  \label{eq:battery_rev}
  \prob{b = 0} = \prob{b^{\rev} = B}
\end{equation}
i.e., the long run fraction of time the battery is empty in the
original system equals the long run fraction of time the battery is
full in the reversed system (this observation seems to have been made
first in \cite{Mitra88}).

In light of \eqref{eq:lolp_batt_empty} and \eqref{eq:battery_rev}, it
suffices to show that
\begin{equation}
  \label{eq:decay_rate_rev}
  \lim_{B \ra \infty} \frac{\log \prob{b^{\rev} = B}}{B} = -\lambda,
\end{equation} where $\lambda$ is defined by
\eqref{eq:decay_rate}. Finally, \eqref{eq:decay_rate_rev} follows by
noting that in the reversed system, the buffer evolution is given
by $$b^{\rev}(k+1) = \left[b^{\rev}(k) + Y(k)\right]_{[0,B]},$$ where
$Y(k) = -r(X(k)).$
This correponds to a finite buffer queue with an increment process
$Y(\cdot).$ Logarithmic large buffer asymptotics of the form
\eqref{eq:decay_rate_rev} are known for such systems for a broad class
of stationary increment processes, which includes the present case,
where $Y(\cdot)$ is a function of the background Markov process
$X(\cdot);$ see \cite{Toomey98} and \cite[Section~6.5]{BigQueues}.

That $\lambda \in (0,\infty)$ follows from the characterization of the
function $\Lambda(\cdot)$ in the present setting, which is done below
in Section~\ref{app:Lambda}.


\subsection{Proof of~\eqref{eq:lolp_batt_empty}}
\label{app:lolp_batt_empty}

First, we note that $b(k) + r(X(k)) < 0$ implies $b(k+1) = 0.$ It
therefore follows that $$\prob{b + r(X)} < 0 \leq \prob{b = 0}.$$ It
therefore suffices to show that there exists a positive constant~$c$
such that
\begin{equation}
\label{eq:lolp_batt_empty_1}
\prob{b + r(X) < 0} \geq c \prob{b = 0}.
\end{equation}
This follows from our assumption A1, which ensures whenever $b(k) =
0,$ there is a positive probability of a loss of load at time $k+1$
(i.e., $b(k)+r(X(k)) < 0$). The formal argument is as follows.

Define $S_- := \{s \in S:\ r(s) \leq 0.\}$ By Assumption A2, $|S_-| >
0.$ Let $\tilde{X}(k) := X(k-1).$ Note that $(b(k),\tilde{X}(k))$ is
an ergodic discrete-time Markov chain. Moreover, $$\prob{b = 0}
= \sum_{s \in S_-} \prob{b=0,\tilde{X} = s}.$$

Now, pick $s \in S_-.$ Define the renewal process defined by the
renewal instants $(k:\ b(k)=0,\tilde{X}(k)=s.)$ Let the reward $R_j$
over renewal cycle~$j$ be the number of loss of load events in that
cycle. By the renewal reward theorem
(see \cite[Section~5.4]{Gallager2013}),
\begin{align}
\prob{b + r(X) < 0} &= \Exp{R_j} \prob{b = 0,\tilde{X}=s} \nn \\
&\geq p \prob{b = 0,\tilde{X}=s}, \label{eq:eq:lolp_batt_empty_2}
\end{align}
where $p = \min_{s\in S_-} \prob{\tilde{X}(k+1)=s|\ \tilde{X}(k)=s} >
0$ (by Assumption A1). Summing \eqref{eq:eq:lolp_batt_empty_2} over
$s \in S_-,$ we see that $$\prob{b + r(X) <
0} \geq \frac{p}{|S_-|} \prob{b = 0}.$$ This completes the proof.

\subsection{Characterization of $\Lambda(\cdot)$}
\label{app:Lambda}

Define $X^*(k):= X(-k)$ denote the time-reversal of the process
$X(\cdot).$ Note that $X^*$ is a DTMC with transition probability
matrix $P^*,$ where $P^*_{s,s'} = \nicefrac{\pi(s')
P_{s',s}}{\pi(s)}.$ Here, $P$ denotes the transition probability
matrix associated with $X(\cdot).$ It now follows from an application
of the Perron Frobenius theorem that $\Lambda(\theta)
= \log \rho(M(\theta)),$ where $\rho(M(\theta))$ is the Perron
Frobenuis eigenvalue of the non-negative matrix $M(\theta)$ defined
as $$M_{s,s'}=P^*_{s,s'}e^{-\theta r(s)}$$
(see \cite[Example~2.15]{BigQueues}).

The function $\Lambda(\cdot)$ has the following properties that are
relevant to us.
\begin{enumerate}
\item $\Lambda(\cdot)$ is convex and differentiable
\item $\Lambda'(0) = \sum_s -\pi(s)r(s) = -\Delta < 0$
\item $\lim_{\theta \ra \infty} \Lambda(\theta) = \infty$
\end{enumerate}

That $\Lambda(\cdot)$ is convex follows from the fact that it is
pointwise limit of convex functions. Its differentiability follows
from the differentiability of Perron Frobenius eigenvalue of a
non-negative matrix with respect to its entries. Statement~(2) above
follows from Lemma~3.2 in~\cite{BigQueues}.

Finally, to show Statement~(3), we use
$\rho(M(\theta)) \geq \rho(G(\theta)),$ where $G(\theta)$ is a
symmetric non-negative matrix defined as $$G_{s,s'}(\theta)
= \sqrt{M_{s,s'}(\theta)M_{s',s}(\theta)};$$ (see Theorem~2
in \cite{Schwenk1986}). It therefore suffices to show that
$\rho(G(\theta)) \ra \infty$ as $\theta \ra \infty,$ which follows
trivially from the observation that
$\mathrm{trace}(G(\theta)) \ra \infty$ as $\theta \ra \infty$ (note
that all diagonal entries of $G$ are positive due to Assumption~A1).

\section{Concluding Remarks}

Our results motivate several important avenues for future work. On the
analytical front, sharper large buffer asymptotics (say exact rather
than logarithmic) would give a finer characterization of the
dependence of battery requirement on the number of users. Also, it
would be very interesting to model weakly dependent users to obtain a
battery scaling that is intermediate between the $O(1)$ scaling with
independent users and the $O(N)$ scaling that results when the net
generation is perfectly correlated across users.

But more importantly, it is important to note that the battery sharing
model we consider assumes that there are no transmission contraints
that limit the charging/discharging of the battery by the different
users/locations. In practice, battery sharing between independent
users (say wind generators that are geographically far apart) will
likely be affected by line constraints (as well as regulatory
constraints). This work thus motivates a deeper understanding of the
tradeoff between battery sizing, battery placement, and the
provisioning of transmission capacity.

Moreover, enabling battery sharing between geographically distributed
wind generators would also require pricing and regulatory
innovations. Several policy questions would need to be
addressed. Should these generators be allowed to participate and bid
in electricity markets as one combined entity? How should this entity
compensate the grid for the transmission infrastructure required for
battery sharing? Should such conglomerations be permitted to form
across ISO jurisdictions?

\ignore{ We considered a
  battery of size $B$ shared between $N$ users having independent net
  generation processes with positive steady state mean. Our main
  thrust in this paper is showing that if this battery suffices for
  each user individually to achieve a $\LOLP \leq \epsilon$, it
  suffices for all of them collectively.  We found that as we increase
  the battery size $B$, the $\LOLP$ decays (asymptotically) at least
  as fast as the slowest individual decay rate. We showed this using
  large deviations techniques.  We numerically validated this result
  for synthetic data, and also from real data.  }

\bibliographystyle{IEEEtran}
\bibliography{references}

\end{document}